%\mag=\magstephalf

\makeatletter
\newcommand{\dontusepackage}[2][]{%
  \@namedef{ver@#2.sty}{9999/12/31}%
  \@namedef{opt@#2.sty}{#1}}
\makeatother

\dontusepackage[sort&compress]{natbib}
\newcommand{\bibpunct}[6]{}

\documentclass[oneside,nocrop]{ipart_v1}

\usepackage{t1enc}
\usepackage[latin1]{inputenc}
\usepackage[english]{babel}
\usepackage[normalem]{ulem}

\usepackage{amsthm}
\usepackage{yfonts}

\usepackage{bbm}
\usepackage{bm}
\usepackage{mathrsfs}

\usepackage{amsrefs}
\usepackage{enumerate}
\usepackage{tabu}
\usepackage{tikz-cd}
\usepackage{upgreek}

\numberwithin{equation}{section}

\theoremstyle{plain}

\newtheorem{corollary}[subsection]{Corollary}
\newtheorem{lemma}[subsection]{Lemma}
\newtheorem{proposition}[subsection]{Proposition}
\newtheorem{theorem}[subsection]{Theorem}

\theoremstyle{definition}

\newtheorem{definition}[subsection]{Definition}

\newcommand{\CA}{\mathcal{A}}
\newcommand{\Ahat}{\widehat{\mathcal{A}}}
\newcommand{\tCA}{\widetilde{\mathcal{A}}}
\renewcommand{\AA}{\mathbb{A}}
\newcommand{\AAhat}{\widehat{\mathbb{A}}}
\newcommand{\tAA}{\widetilde{\mathbb{A}}}
\newcommand{\FF}{\mathbb{F}}
\newcommand{\CC}{\mathcal{C}}
\newcommand{\CF}{\mathcal{F}}
\newcommand{\CO}{\mathcal{O}}
\newcommand{\CU}{\mathcal{U}}
\newcommand{\CV}{\mathcal{V}}
\newcommand{\CW}{\mathcal{W}}

\newcommand{\Om}{\Omega}
\newcommand{\Cech}{{\v{C}ech}}
\newcommand{\Ss}{\mathbb{S}}
\newcommand{\half}{\tfrac12}
\newcommand{\eps}{\epsilon}
\let\Re=\undef
\DeclareMathOperator{\Re}{Re}
\newcommand{\Wedge}{\Lambda}

\newcommand{\R}{\mathbb{R}}
\newcommand{\Z}{\mathbb{Z}}
\newcommand{\N}{\mathbb{N}}
\renewcommand{\H}{\mathbb{H}}
\newcommand{\T}{\mathsf{T}}
\newcommand{\V}{V}
\newcommand{\p}{\partial}
\newcommand{\<}{\langle}
\renewcommand{\>}{\rangle}
\newcommand{\tint}{{\textstyle\int}}

\DeclareMathOperator{\Spin}{Spin}
\DeclareMathOperator{\SU}{SU}
\DeclareMathOperator{\SO}{SO}
\DeclareMathOperator{\so}{\mathfrak{so}}
\DeclareMathOperator{\ad}{ad}
\DeclareMathOperator{\im}{im}
\DeclareMathOperator{\gh}{gh}
\DeclareMathOperator{\pr}{pr}

\renewcommand{\S}{\mathsf{S}}
\newcommand{\G}{\mathsf{G}}
\newcommand{\s}{\mathsf{s}}
\renewcommand{\q}{\mathsf{q}}
\newcommand{\D}{\mathsf{D}}
\DeclareMathOperator{\pa}{\mathit{p}}

\renewcommand{\phi}{\varphi}
\newcommand{\bull}{\bullet}
\DeclareMathOperator{\gr}{gr}

\newcommand{\Cone}{C}
\newcommand{\Phase}{M}
\renewcommand{\[}{(\!(}
\renewcommand{\]}{)\!)}

\begin{document}

\title{Covariance of the classical Brink--Schwarz superparticle}

\author{Ezra Getzler and Sean Weinz Pohorence}

\begin{abstract}
  We show that the classical Brink--Schwarz superparticle is a
  generalized AKSZ field theory. We work in the Batalin--Vilkovisky
  formalism: the main technical tool is the vanishing of
  Batalin--Vilkovisky cohomology below degree $-1$.
\end{abstract}

\maketitle

\thispagestyle{empty}

\section{Introduction}

The construction of string theory is based on a nonlinear sigma-model
with worldsheet and target a Riemann surface $\Sigma$ and
$26$-dimensional Minkowski space respectively. The superstring, a
supersymmetric analogue of the string, may be represented in several
ways. In the Neveu--Schwarz/Ramond superstring, $\Sigma$ is a Riemann
supersurface, the target is $10$-dimensional Minkowski space, and the
target-space supersymmetry of the theory is not apparent at the
classical level. By contrast, in the Green--Schwarz superstring,
$\Sigma$ is a Riemann surface, while the target is a superspace whose
underlying vector space is the same ten-dimensional Minkowski space,
but which incorporates an additional fermionic chiral Majorana-Weyl
spinor. In this model, the target-space supersymmetry is manifest, at
the cost of quantization being considerably more complicated. (There
are other approaches to the superstring, such as the pure spinor
theory of Berkovits \cite{Berkovits}.)

There has been a lot of interest in the study of toy models of these
two formulations of superstring theory, in which the Riemann surface
$\Sigma$ is replaced by a one-dimensional manifold, the worldline: the
resulting toy models of the Neveu--Schwarz/Ramond superstring and
Green--Schwarz superstring are respectively known as the spinning
particle \cite{BDZDH} and the superparticle \cite{BS}. In this paper,
we focus on the superparticle.

The Batalin--Vilkovisky formalism for classical field theories
provides a powerful way of encoding symmetries: it is especially
adapted to extending on-shell symmetries, such as extended
supersymmetry and supergravity, off-shell. Poisson geometry studies
Maurer--Cartan elements in the Schouten algebra of a manifold: the
classical Batalin--Vilkovisky master equation generalizes this to
graded supermanifolds, and then further generalizes from finite
dimensional calculus to variational calculus.

One of the principles in the construction of solutions of the
classical master equation of Batalin and Vilkovisky is that the
cohomology of the Batalin--Vilkovisky differential on local
functionals, known as the (classical) Batalin--Vilkovisky cohomology,
is bounded below. In the absence of worldsheet supersymmetry, it
appears to be a general phenomenon that these cohomology groups vanish
for ghost number less than $-d$, where $d$ is the dimension of the
worldsheet. This condition may be violated in the presence of
worldsheet supersymmetry: as shown in \cite{cohomology}, the classical
Batalin--Vilkovisky cohomology of the spinning particle is nontrivial
at arbitrarily large negative ghost number.

The superparticle does not have worldsheet supersymmetry, unlike the
spinning particle. One of the main results of this paper is that the
classical Batalin--Vilkovisky cohomology for the superparticle,
suitably interepreted, vanishes below ghost number $-1$.

Using this result, we show that the classical superparticle is a
covariant field theory, in the sense of \cite{covariant}. Let $\int\S$
be the action of the superparticle, which satisfies the classical
Batalin--Vilkovisky master equation
\begin{equation*}
  \half ( \tint \S , \tint \S ) = 0 .
\end{equation*}
We show that there is a power series
\begin{equation*}
  \S_u = \S + \sum_{n=0}^\infty u^{n+1} \, \G_n ,
\end{equation*}
where $\G_n$ is a density of ghost number $-2n-2$ and $u$ is a formal
variable of ghost number $2$, which satisfies the curved
Maurer--Cartan equation
\begin{equation}
  \label{covariant}
  \delta\S_u + \half ( \tint \S_u , \tint \S_u ) = - u \tint \D .
\end{equation}
Here, $\D$ is the density
\begin{equation*}
  \D = x^+_\mu \p x^\mu + p^{+\mu} \p p_\mu - e \p e^+ + c^+ \p c
  + \sum_{n=0}^\infty \T( \theta^+_n , \p\theta_n) ,
\end{equation*}
whose associated Batalin--Vilkovisky Hamiltonian vector field equals
$\p$, the generator of time translation along the worldline. In
particular,
\begin{equation}
  \label{g0}
  ( \delta + \s ) \bigl( \tint \G_0 \bigr) = - \tint \D .
\end{equation}
Possessing a solution to \eqref{covariant} is a property that the
superparticle shares with AKSZ theories \cite{AKSZ}: in this sense,
the superparticle may be viewed as a generalized AKSZ theory.

We solve \eqref{covariant} over the open subset of the phase space
where the momentum $p_\mu$ is nonzero. For the sheaves that we
consider, this subset has nontrivial cohomology. The usual way to deal
with this would be to work with \Cech\ cochains.  Unfortunately, there
is no way to extend the Batalin--Vilkovisky antibracket to the space
of \Cech\ cochains: this is related to the failure of the cup-product
to be graded commutative. We circumvent this difficulty by working
with Sullivan's Thom--Whitney complex \cite{Sullivan}: this replaces
simplicial cochains by differential forms on simplices, and allows the
definition of the antibracket at the cochain level. In passing, we
note that in Berkovits's description of superstrings using pure
spinors \cite{Berkovits}, a similar issue arises, involving the cone
of pure spinors with the origin removed.

The action of the superparticle $\S$ in first-order formalism is
globally defined, and was found by Lindstr\"om et al.\
\cite{LRSVV}. We construct the next term in the expansion of $\S_u$
explicitly as a solution to \eqref{g0}. For $n>0$, the coefficient
$\G_n$ of $u^{n+1}$ in $\S_u$ has ghost number $-2n-2$, and is a
solution to the equation
\begin{equation*}
  (\delta+\s) \bigl( \tint \G_n \bigr) = - \frac{1}{2} \sum_{j+k=n-1}
  ( \tint \G_j , \tint \G_k ) . 
\end{equation*}
Since the cohomology of the operator $\delta+\s$ vanishes below degree
$-1$, we may solve this equation.

Let $C^*(\so(9,1))$ be the graded commutative algebra of the Lie
algebra $\so(9,1)$ of the Lorentz group (the exterior algebra
generated by $\so(9,1)^\vee$), with differential $d$.  The action of
the Lie algebra $\so(9,1)$ on the space of fields of the superparticle
is generated by the Batalin--Vilkovisky currents
\begin{equation}
  \label{Lorentz_alg}
  M^{\mu\nu} = \eta^{\lambda[\mu} x^{\nu]} x^+_\lambda
  - \eta^{\lambda[\mu} p^{+\nu]} p_\lambda
  - \sum_{n=0}^\infty\T^{\mu\nu}(\theta^+_n,\theta_n) .
\end{equation}
Let $\S(\eps)=\S+M^{\mu\nu}\eps_{\mu\nu}$, where $\eps_{\mu\nu}$ is
the dual basis of $\so(9,1)^\vee$; this is an element of total degree
$0$ in the tensor product of $C^*(\so(9,1))$ and the
Batalin--Vilkovisky graded Lie algebra. The Lorentz invariance of the
action $\S$ may be expressed by the following extension of the
classical master equation:
\begin{equation*}
  d\tint\S(\eps) + \half ( \tint \S(\eps) , \tint \S(\eps) ) = 0 .
\end{equation*}

The Lorentz group does not act on Thom--Whitney complex, because the
open cover itself is not Lorentz invariant; in particular, $\G_0$ is
not invariant under the action of $\so(9,1)$. Nevertheless, it may be
proved that $\S_u$ has an enhancement
\begin{equation}
  \label{Lorentz}
  \S_u(\eps) = \S(\eps) + \sum_{n=0}^\infty u^{n+1} \G_n(\eps) ,
\end{equation}
where $\G_n(\eps)$ is an element of total degree $-2n-2$ in the tensor
product of $C^*(\so(9,1))$ and the Thom--Whitney extension of the
Batalin--Vilkovisky graded Lie algebra, such that the following
extension of \eqref{covariant} holds:
\begin{equation}
  \label{covariant:Lorentz}
  (d+\delta) \tint \S_u(\eps) + \half ( \tint \S_u(\eps) , \tint
  \S_u(\eps) ) = - u \tint \D .
\end{equation}
In mathematical terms, this equation, which is nothing but the BRST
formalism for the global symmetry Lie algebra $\so(9,1)$, expresses
that the covariant field theory is Lorentz invariant up to homotopy.

It may be verified that $\S$ and $\G_0$ are invariant under
supersymmetry. We also show that the terms $\G_n(\eps)$, $n\ge0$, may
be chosen to be invariant under supersymmetry.

\section{The classical Batalin--Vilkovisky master equation}

Consider a Batalin--Vilkovisky model with fields $\{\xi^a\}_{a\in I}$,
of ghost number $\gh(\xi^a)\ge0$ and parity $\pa(\xi^a)$ equal to $0$
and $1$ for bosonic and fermionic fields respectively. Denote the
antifield corresponding to the field $\xi^a$ by $\xi^+_a$: it has
ghost number $\gh(\xi^+_a)=-1-\gh(\xi^a)<0$ and opposing parity
$\pa(\xi^+_a)=1-\pa(\xi^a)$ to $\xi^a$. Denote by $\CA$ the algebra of
functions in the variables $\xi^a$ and $\xi^+_a$ and their derivatives
\begin{equation*}
  \{\p^\ell\xi^a\}_{\ell\ge0}\cup \{\p^\ell\xi^+_a\}_{\ell\ge0}
\end{equation*}
with respect to the generators of negative ghost number. The bosonic
fields of ghost number $0$ play a special role in the theory: they are
coordinates on a manifold $M$, and we will view $\CA$ as a sheaf over
$M$.

The algebra $\CA$ is filtered by the ghost number of the
antifields. Let $F^k\CA$ be the ideal generated by monomials
\begin{equation*}
  \p^{\ell_1}\xi^+_{a_1} \dots \p^{\ell_n}\xi^+_{a_n}
\end{equation*}
such that $\gh(\xi^+_{a_1})+\dots+\gh(\xi^+_{a_n})+k\le0$. The
subspaces $F^k\CA$ define a decreasing filtration of $\CA$, with
$F^0\CA=\CA$ and $F^j\CA\cdot F^k\CA\subset F^{j+k}\CA$. Denote by
$\Ahat$ the completion of $\CA$ with respect to this filtration:
\begin{equation*}
  \Ahat = \varprojlim_k \CA/F^k\CA .
\end{equation*}

Introduce the partial derivatives
\begin{align*}
  \p_{k,a}
  &= \frac{\p~}{\p(\p^k\xi^a)} : \Ahat^j\to\Ahat^{j-\gh(\xi^a)} , &
  \p^a_k
  &= \frac{\p~}{\p(\p^k\xi^+_a)} : \Ahat^j\to\Ahat^{j-\gh(\xi^+_a)} .
\end{align*}
Let $\p$ be the total derivative with respect to $t$:
\begin{equation*}
  \p = \sum_{k=0}^\infty \Bigl( (\p^{k+1}\xi^a) \p_{k,a} +
  (\p^{k+1}\xi^+_a) \p^a_k \Bigr) .
\end{equation*}
An \textbf{evolutionary} vector field $X$ is a graded derivation of
$\CA$ that commutes with $\p$; such a vector field has the form
\begin{align*}
  X
  &= \pr\left( X^a \p_a + X_a \p^a \right) \\
  &= \sum_{k=0}^\infty \left( \p^kX^a \p_{k,a} + \p^kX_a \p_k^a \right) .
\end{align*}

The Soloviev antibracket on $\CA$ is defined by the formula
\begin{multline*}
  \[ f , g \] = \sum_a (-1)^{(\pa(f)+1)\pa(\xi^a)} \\
  \sum_{k,\ell=0}^\infty \bigl( \p^\ell ( \p_{a,k}f ) \, \p^k
  (\p^a_\ell g) + (-1)^{\pa(f)} \p^\ell ( \p^a_kf ) \, \p^k (
  \p_{a,\ell}g ) \bigr) .
\end{multline*}
This bracket, and its extension to $\Ahat$, satisfies the axioms for a
graded Lie superalgebra: it is graded antisymmetric
\begin{equation*}
  \[ y,x \] = - (-1)^{(\pa(x)+1)(\pa(y)+1)} \[ x,y \] ,
\end{equation*}
and satisfies the Jacobi relation
\begin{equation*}
  \[x,\[y,z\]\] = \[\[x,y\],z\] + (-1)^{(\pa(x)+1)(\pa(y)+1)} \[y,\[x,z\]\] .
\end{equation*}
Furthermore, it is linear over $\p$:
\begin{equation*}
  \[ \p f , g \] = \[ f , \p g \] = \p \[ f , g \] .
\end{equation*}
In this paper, all graded Lie superalgebras are $1$-shifted: the
antibracket has ghost number $1$.

The superspace $\CF=\Ahat/\p\Ahat$ of functionals is the graded
quotient of $\Ahat$ by the subspace $\p\Ahat$ of total
derivatives. Denote the image of $f\in\Ahat$ in $\CF$ by $\int f$. The
Soloviev antibracket $\[f,g\]$ descends to an antibracket
\begin{equation*}
  (\tint f , \tint g )
\end{equation*}
on $\CF$, called the Batalin--Vilkovisky antibracket. Thus, $\CF$ is a
sheaf of graded Lie superalgebras over $M$.

Introduce the variational derivatives
\begin{align*}
  \delta_a
  &= \sum_{k=0}^\infty (-\p)^k \p_{k,a} :
    \CF^j\to\Ahat^{j-\gh(\xi^a)} , &
  \delta^a
  &= \sum_{k=0}^\infty (-\p)^k \p^a_k :
    \CF^j\to\Ahat^{j-\gh(\xi^+_a)} .
\end{align*}
The Batalin--Vilkovisky antibracket is given by the formula
\begin{equation*}
  ( \tint f , \tint g ) = (-1)^{(\pa(f)+1)\pa(\xi^a)} \tint \bigl(
  (\delta_af ) \, (\delta^a g) + (-1)^{\pa(f)} ( \delta^af ) \, (
  \delta_ag ) \bigr) .
\end{equation*}

The classical Batalin--Vilkovisky master equation is the equation for
an element $S\in\CF$ of ghost number $0$ and even parity
\begin{equation}
  \label{MASTER}
  \half ( \tint S , \tint S ) = 0 .
\end{equation}
The Batalin--Vilkovisky differential is the Hamiltonian vector field
\begin{equation*}
  s = \sum_a (-1)^{\pa(\xi^a)} \pr \bigl( (\delta_aS)
  \p^a + (\delta^aS) \p_a \bigr) .
\end{equation*}
This is a graded derivation of ghost number $1$, and satisfies the
equation $s^2=0$ precisely when $S$ satisfies the classical master
equation \cite{covariant}*{Section~3}.

\section{The particle}

Before recalling the Batalin--Vilkovisky approach to the
superparticle, we review the simpler case of the particle. Consider
the $d$-dimensional Minkowski space $\V=\R^{d-1,1}$ with basis
$\{v_\mu\}_{0\le\mu<d}$ and inner product
\begin{equation*}
  \< v_\mu , v_\nu \> = \eta_{\mu\nu} .
\end{equation*}
The particle has physical fields $x^\mu$, and Lagrangian density
$S=\half \eta_{\mu\nu} \p x^\mu \p x^\nu$. For technical reasons, we
prefer to work in a first-order formulation of this theory, which has
additional physical fields $p_\mu$, and the action
\begin{equation*}
  S = p_\mu \p x^\mu - \half \eta^{\mu\nu} p_\mu p_\nu .
\end{equation*}

In order to have a theory with local reparametrization invariance, we
may couple the particle to ``gravity'' on the world-line, represented
by a nowhere-vanishing $1$-form field $e$, the graviton. Of course,
the gravitational field in dimension $1$ has no dynamical content. The
modified action for the particle is
\begin{equation*}
  S_{[0]} = p_\mu \p x^\mu - \half e \eta^{\mu\nu} p_\mu p_\nu .
\end{equation*}
The associated differential is
\begin{equation*}
  s_{[0]} = \pr \left( ( \p x^\mu - \eta^{\mu\nu} e p_\nu )
    \frac{\p~}{\p p^{+\mu}} - \p p_\mu \, \frac{\p~}{\p x^+_\mu} -
    \half \eta^{\mu\nu} p_\mu p_\nu \frac{\p~}{\p e^+} \right) .
\end{equation*}
The variation $\s_{[0]}e^+=-\half \eta^{\mu\nu} p_\mu p_\nu$ may be
recognized as the $d=1$ stress-energy tensor.

The local gauge symmetries of this model correspond to
cohomology classes of $s_{[0]}$ at ghost number $-1$:
\begin{equation*}
  s_{[0]} \bigl( \p e^+ - \eta^{\mu\nu} x^+_\mu p_\nu \bigr) = 0 .
\end{equation*}
This cohomology class is killed by the introduction of a ghost field
$c$, with ghost-number $1$, transforming as a scalar on the worldline,
and the addition to the action of the term
\begin{equation*}
  S_{[1]} = \bigl( \p e^+ - \eta^{\mu\nu} x^+_\mu p_\nu \bigr) c .
\end{equation*}
This adds the following terms to the differential:
\begin{equation*}
  s_{[1]} = \pr \left( \eta^{\mu\nu} c x^+_\nu \frac{\p~}{\p p^{+\mu}}
    + ( \p e^+ - \eta^{\mu\nu} x^+_\mu p_\nu ) \frac{\p~}{\p c^+}
    - \eta^{\mu\nu} c p_\nu \frac{\p~}{\p x^\mu} - \p c \frac{\p~}{\p
      e} \right) .
\end{equation*}

We see that the bosonic fields of ghost number $0$ of the theory are
the position $x^\mu$ and the momentum $p_\mu$, and the manifold
$\Phase$ is the cotangent bundle $T^\vee\V$ of $\V$. For definiteness,
we take the structure sheaf $\CO$ of $\Phase$ to be functions with
analytic dependence on $x^\mu$ and algebraic dependence on $p_\mu$,
but our results are actually insensitive to the regularity as
functions of $x^\mu$. The sheaf $\CA$ is the graded commutative
algebra generated over $\CO$ by the variables
\begin{equation*}
  \{\p^\ell x^\mu,\p^\ell p_\mu\}_{\ell>0} \cup \{\p^\ell
  e,e^{-1},\p^\ell c,\p^\ell x^+_\mu,\p^\ell p^{+\mu},\p^\ell
  e^+,\p^\ell c^+\}_{\ell\ge0} .
\end{equation*}
As in the last section, we denote its completion by $\Ahat$.

The sum $S=S_{[0]}+S_{[1]}$ satisfies the classical master equation,
and the cohomology of the differential $s=s_{[0]}+s_{[1]}$ on the
space of functionals $\CF$ vanishes below degree $-1$. 

In preparation for the proof, we recall a criterion of Boardman for
the convergence of a spectral sequence. Let $V$ be a complex, with
differential $d:V^i\to V^{i+1}$. A decreasing filtration on $V$ is a
sequence of subcomplexes
\begin{equation*}
  \dots \supset F^{-1}V \supset F^0V \supset F^1V \supset \cdots
\end{equation*}
The associated graded complex is
\begin{equation*}
  \gr^k_FV = F^kV/F^{k+1}V .
\end{equation*}
The filtration is \textbf{exhaustive} if for each $i\in\Z$,
\begin{equation*}
  \bigcup_k F^kV^i = V^i .
\end{equation*}
The filtration is \textbf{Hausdorff} if for each $i\in\Z$,
\begin{equation*}
  \bigcap_k F^kV^i = 0 .
\end{equation*}
The filtration is \textbf{complete} if
\begin{equation*}
  V = \varprojlim_k V/F^kV .
\end{equation*}
The filtration $F^\bull$ induces a filtration on the cohomology
$H^*(V)$, which we denote by the same letter. The spectral sequence
associated to the filtration converges if for all $(p,q)\in\Z^2$ the
induced morphism
\begin{equation*}
  \gr^p_F H^{p+q}(V) \longrightarrow E^{pq}_\infty
\end{equation*}
is an isomorphism, and the induced filtration on $H^*(V)$ is complete,
exhaustive and Hausdorff. The spectral sequence degenerates if
$E_\infty=E_r$ for $r\gg0$.
\begin{theorem}[Boardman \cite{Boardman}]
  If the spectral sequence associated to a complete, exhaustive
  Hausdorff filtration $(V,d,F^kV)$ degenerates, then it is
  convergent.
\end{theorem}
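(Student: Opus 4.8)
The plan is to run Boardman's comparison of the cycle and boundary groups of the filtration against the limiting page, keeping careful track of where each of the three hypotheses on $F^\bull V$ enters. For $p,q\in\Z$ and $r\ge1$, set
\begin{equation*}
  Z_r^{pq} = \{\,x\in F^pV^{p+q} : dx\in F^{p+r}V^{p+q+1}\,\}, \qquad
  B_r^{pq} = F^pV^{p+q}\cap d\bigl(F^{p-r}V^{p+q-1}\bigr),
\end{equation*}
so that $E_r^{pq}$ is the subquotient $(Z_r^{pq}+F^{p+1}V)/(B_{r-1}^{pq}+F^{p+1}V)$, and write $Z_\infty^{pq}=\bigcap_r Z_r^{pq}$ and $B_\infty^{pq}=\bigcup_r B_r^{pq}$. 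Degeneration guarantees that for each bidegree the pages are eventually constant, so $d_r$ vanishes in this bidegree for $r\gg0$ and the limiting page $E_\infty^{pq}$ is simply the stable value $E_r^{pq}$.

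First I would identify the two infinite groups. Since the filtration is exhaustive, $\bigcup_r F^{p-r}V=V$, so $B_\infty^{pq}$ is exactly the group of boundaries lying in $F^pV$; since it is Hausdorff, $\bigcap_r F^{p+r}V=0$, so $Z_\infty^{pq}$ is exactly the group of cocycles lying in $F^pV$. A short computation with the modular law then produces a natural isomorphism
\begin{equation*}
  \gr^p_F H^{p+q}(V) \;\cong\; \widehat E_\infty^{pq} :=
  Z_\infty^{pq}\big/\bigl(B_\infty^{pq}+(Z_\infty^{pq}\cap F^{p+1}V)\bigr),
\end{equation*}
the ``naive'' limit term assembled from honest cocycles and boundaries. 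This step is where exhaustiveness and Hausdorffness are used, and it requires no degeneration hypothesis.

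The substance of the theorem is the comparison of $\widehat E_\infty^{pq}$ with the true limiting page $E_\infty^{pq}$, and I expect this to be the main obstacle. The inclusions $Z_\infty^{pq}\subseteq Z_r^{pq}$ and $B_{r-1}^{pq}\subseteq B_\infty^{pq}$ exhibit $\widehat E_\infty^{pq}$ as the limit of the ``permanent cycle'' images inside the tower of pages, and identifying this limit with $E_\infty^{pq}$ amounts to commuting the formation of this subquotient with the inverse limit over $r$. In general the comparison map sits in an exact sequence whose error term is Boardman's derived limit $RE_\infty^{pq}$, a ${\varprojlim}^1$ associated to the tower of cycle groups. Here degeneration does the essential work: once the pages are constant in each bidegree for $r\ge N$, the tower computing $RE_\infty^{pq}$ is eventually constant and hence Mittag--Leffler, so $RE_\infty^{pq}=0$ and the comparison map $\widehat E_\infty^{pq}\to E_\infty^{pq}$ is an isomorphism. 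Completeness of $F^\bull V$ is what licenses the passage to the limit, by allowing compatible families of approximate cocycles to be lifted to genuine cocycles. Finally, the same vanishing of ${\varprojlim}^1$, fed through the Milnor-type sequence relating $H^*(V)$ to $\varprojlim_p H^*(V/F^pV)$, shows that the induced filtration on $H^*(V)$ satisfies $\bigcap_p F^pH^*(V)=0$ and $H^*(V)=\varprojlim_p H^*(V)/F^pH^*(V)$; exhaustiveness of the induced filtration is immediate, since any cocycle already lies in some $F^pV$. Assembling the isomorphism $\gr^p_F H^{p+q}(V)\cong E_\infty^{pq}$ with the completeness, exhaustiveness, and Hausdorffness of the induced filtration yields convergence in the sense required.
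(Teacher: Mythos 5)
The paper's own ``proof'' is pure citation---it assembles Theorems~8.2 and~9.2, Lemma~8.1, and the remark after Theorem~7.1 of \cite{Boardman}---so any direct argument is necessarily a reconstruction of Boardman's machinery. Your skeleton is the right one, and your first step is correct: exhaustiveness identifies $B_\infty^{pq}$ with the boundaries lying in $F^pV$, Hausdorffness identifies $Z_\infty^{pq}$ with the cocycles lying in $F^pV$, and the modular-law computation of $\gr^p_FH^{p+q}(V)$ is fine.

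The gap is at the crux, where you dispose of the comparison $\widehat E_\infty^{pq}\to E_\infty^{pq}$: the claim that degeneration makes the tower $\{Z_r^{pq}\}_r$ eventually constant, hence Mittag--Leffler, hence $RE_\infty^{pq}=0$, is false. Degeneration stabilizes the \emph{pages}, not the cycle groups: vanishing of $d_r$ on $E_r^{pq}$ yields only
\begin{equation*}
  Z_r^{pq} \;=\; Z_{r+1}^{pq} + Z_{r-1}^{p+1,q-1} ,
\end{equation*}
i.e.\ constancy modulo cycles of strictly deeper filtration. Concretely, take $V^0=V^1=\prod_{k\ge0}\R$ with $d$ the identity and $F^pV^i=\prod_{k\ge p}\R$: this filtration is complete, exhaustive and Hausdorff, the spectral sequence degenerates with $E_1=0$, yet $Z_r^{p,-p}=F^{p+r}V^0$ decreases strictly in $r$, so the tower is not Mittag--Leffler (for a tower of inclusions, Mittag--Leffler \emph{is} eventual constancy). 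Its $\varprojlim^1_r$ does vanish---but because the filtration is complete, not for the reason you give. This is not a slip of wording: deducing $RE_\infty^{pq}=0$ from degeneration requires an infinite correction procedure, repeatedly splitting each discrepancy $x_{r+1}-x_r\in Z_r^{pq}$ into a stable part plus a part of higher filtration and summing the resulting series, and that series converges only by completeness. In your write-up, completeness is invoked separately and vaguely (``licenses the passage to the limit''), while the vanishing of $RE_\infty$ is attributed entirely to a Mittag--Leffler mechanism that does not apply; this interlocking of degeneration with completeness is exactly the content of Boardman's Lemma~8.1 and Theorems~8.2 and~9.2 (the latter also handling the extra whole-plane obstruction, which your sketch never mentions). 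The same conflation affects your last step: the $\varprojlim^1$ appearing in the Milnor sequence for $H^*(V)$ is taken over the filtration index $p$ of the tower $\{H^*(V/F^pV)\}$, not over the page index $r$ of the cycle tower, so it is not ``the same vanishing,'' and Hausdorffness and completeness of the induced filtration on $H^*(V)$---one of the genuinely delicate points of the theorem---still lack an argument.
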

\begin{proof}
  Combine the following results from \cite{Boardman}: Theorems~8.2 and
  9.2, the remark after Theorem~7.1, and Lemma~8.1.
\end{proof}

A filtration on a differential graded algebra $A$ is a filtration on
the underlying complex such that $F^jA\cdot F^kA\subset F^{j+k}A$. In
this case, the pages $(E_r,d_r)$ of the spectral sequence are
themselves differential graded algebras, and the product on
$E_{r+1}\cong H^*(E_r,d_r)$ is induced by the product on $E_r$.

Introduce the light-cone
\begin{equation*}
  \Cone = \{ (x^\mu,p_\mu) \in \Phase \mid \eta^{\mu\nu}p_\mu p_\nu=0 \} .
\end{equation*}
\begin{theorem}
  \label{particle:vanish}
  The sheaf $H^i(\Ahat,s)$ vanishes for $i<0$, and is concentrated on
  the light-cone $\Cone$.

  Let $\tCA$ be the quotient of $\Ahat$ by constant multiples of the
  identity. The sheaf $H^i(\tCA,s)$ also vanishes for $i<0$, and is
  concentrated on the light-cone.
\end{theorem}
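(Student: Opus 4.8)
The plan is to compute the cohomology of $s = s_{[0]}+s_{[1]}$ on $\Ahat$ by means of the spectral sequence associated to the filtration $F^k\Ahat$ by antifield ghost number, invoking Boardman's theorem to deduce convergence. First I would check that this filtration is complete (by definition of $\Ahat$ as the inverse limit), exhaustive and Hausdorff, so that it suffices to understand the associated graded differential $\gr s$ and show that the spectral sequence degenerates. On the associated graded, the differential decomposes according to how it shifts antifield number: the part of $s$ that lowers antifield ghost number (the Koszul--Tate piece) survives to $\gr s$, while the terms that preserve or raise it appear on later pages. Concretely, $\gr s$ is the Koszul--Tate differential generated by the equations of motion, sending $p^{+\mu}\mapsto \p x^\mu - \eta^{\mu\nu}ep_\nu$, $x^+_\mu\mapsto -\p p_\mu$, $c^+\mapsto \p e^+$ (modulo the part that raises antifield number), and $e^+\mapsto -\tfrac12\eta^{\mu\nu}p_\mu p_\nu$.

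Next I would compute $H^*(\gr\Ahat,\gr s)$. The key observation is that the generators $\{p^{+\mu},\p^\ell x^\mu\}$ and $\{x^+_\mu,\p^\ell p_\mu\}$ and $\{c^+,\p^\ell e^+\}$ pair up into acyclic Koszul complexes: each antifield generator kills a corresponding tower of jet variables of the physical fields, so the homology is computed by a standard jet-bundle Koszul resolution argument. After contracting these acyclic pairs, what remains is the Koszul complex on the single generator $e^+$ with differential multiplying by $\eta^{\mu\nu}p_\mu p_\nu$. Since $\CO$ has algebraic dependence on $p_\mu$ and $p\mapsto \eta^{\mu\nu}p_\mu p_\nu$ is a nonzerodivisor away from the light-cone, this Koszul complex is exact off $\Cone$ and has homology only in degree $0$, supported on $\Cone$. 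This shows that $H^i(\gr\Ahat,\gr s)$ vanishes for $i<0$ and is concentrated on the light-cone, and in particular the spectral sequence is concentrated in nonnegative cohomological degree.

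The degeneration then follows because the differentials $d_r$ on later pages have ghost number $1$ but the associated graded cohomology is concentrated in a single relevant region: since $H^{<0}(\gr s)=0$, any nonzero higher differential would have to map into or out of negative degree, forcing it to vanish; more carefully, the higher differentials cannot create cohomology below degree $0$, and the vanishing of $E_1$ below degree $0$ is preserved under passage to $E_\infty$. Boardman's theorem then gives $\gr^p_F H^{p+q}(\Ahat,s)\cong E^{pq}_\infty$, whence $H^i(\Ahat,s)=0$ for $i<0$ and is concentrated on $\Cone$. Finally, for the quotient $\tCA=\Ahat/(\text{constants})$, I would use the long exact sequence associated to the short exact sequence of complexes $0\to\R\to\Ahat\to\tCA\to 0$, where $\R$ sits in degree $0$ with trivial differential; since the constants contribute only in degree $0$, the connecting maps leave $H^{<0}$ unchanged, so $H^i(\tCA,s)$ agrees with $H^i(\Ahat,s)$ for $i<0$ and likewise vanishes there and is concentrated on the light-cone.

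The main obstacle I anticipate is the homological algebra of the jet-variable Koszul complexes in the presence of the total derivative $\p$: one must verify that the acyclicity of the Koszul--Tate piece holds not just for the finite-dimensional target but for the full jet algebra, which requires a careful bookkeeping argument showing that the towers $\{\p^\ell x^\mu\}$ and $\{p^{+\mu},\p p^{+\mu},\dots\}$ (and the analogous towers for the other fields) really do form acyclic subcomplexes under $\gr s$. The subtlety is that $\gr s$ acts on these towers through $\p$-derivatives, so one needs an explicit contracting homotopy, built from the inverse of $\p$ on the relevant graded pieces, compatible with the filtration and convergent in $\Ahat$.
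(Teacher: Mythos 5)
Your overall skeleton (a filtration whose associated graded differential is Koszul-type, Boardman's theorem for convergence, a long exact sequence for $\tCA$) is the same as the paper's, but the specific filtration you chose does not do what you need, and this is a genuine gap rather than a detail. The decreasing filtration $F^k\Ahat$ by antifield ghost number is \emph{not preserved} by $s$: the Koszul--Tate terms of $s$ strictly lower antifield number, e.g.\ $s x^+_\mu = -\p p_\mu$ sends $F^1$ into $F^0$ but not into $F^1$, so there is no spectral sequence ``associated to this filtration'' in the usual sense. Moreover, your description of the mechanics is backwards: for a filtration that \emph{is} preserved, the induced differential on $\gr\Ahat$ is the part of $s$ that preserves filtration degree exactly, the parts that strictly raise it are deferred to later pages, and parts that lower it are simply not allowed. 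To have a Koszul-type operator appear at $E_0$ you must exhibit a grading in which that operator has degree $0$ while every remaining term of $s$ has strictly positive degree. That is precisely the role of the paper's auxiliary grading ($\deg c = 3$, $\deg e^+ = \deg c^+ = -1$, all other generators of degree $0$): with respect to it, $s_0$ pairs $p^{+\mu}$, $x^+_\mu$, $c^+$ against $\p x^\mu - \eta^{\mu\nu}ep_\nu$, $\p p_\mu$, $\p e^+$, the constraint $\eta^{\mu\nu}p_\mu p_\nu$ enters only at $E_1$, and the coercivity estimate \eqref{coerce} shows the filtration is complete, exhaustive and Hausdorff on $\Ahat$. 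Your own bookkeeping already betrays the problem: you place $c^+ \mapsto \p e^+$ at $E_0$ and defer $\eta^{\mu\nu}x^+_\mu p_\nu$, but both terms of $sc^+$ have antifield number one, so this splitting is not a splitting by antifield number at all --- it is the one induced by the paper's auxiliary degree (or by filtering by $\gh_+$, i.e.\ the number of ghosts $c$, which would put the full Koszul--Tate operator at $E_0$). Using instead the increasing filtration by antifield number, which $s$ does preserve here, does not rescue the plan: its $E_0$ differential is the antifield-preserving part $\pr\bigl(-\eta^{\mu\nu}cp_\nu\,\p/\p x^\mu - \p c\,\p/\p e + \cdots\bigr)$, not Koszul--Tate, and that filtration is not exhaustive on the completion $\Ahat$.

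Two further steps fail as written. Your degeneration argument --- ``any nonzero higher differential would have to map into or out of negative degree'' --- is false: the differentials $d_r$ raise total ghost number by $1$, and nothing prevents them from acting between nonnegative degrees; indeed, in the paper's spectral sequence $s_3 = \pr\bigl(-\eta^{\mu\nu}cp_\mu\,\p/\p x^\nu + \p c\,\p/\p e\bigr)$ is a nonzero differential from total degree $0$ to total degree $1$. It is true that vanishing of $E_2$ below total degree $0$ passes to $E_\infty$ (later pages are subquotients), but Boardman's theorem, as you and the paper invoke it, requires actual degeneration, and that needs a separate argument: in the paper it holds because $E_2$ is supported in bidegrees $(3m,-2m)$, $m\ge0$, so only $d_3$ can ever be nonzero after the first page. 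Finally, the statement for $\tCA$ does not follow from ``constants contribute only in degree $0$'': the long exact sequence gives $0 \to H^{-1}(\tCA,s) \to \R \to H^0(\Ahat,s)$, so $H^{-1}(\tCA,s)=0$ is equivalent to injectivity of $\R \to H^0(\Ahat,s)$, i.e.\ to the constant $1$ not being an $s$-coboundary. The paper proves this by showing that $\R \to E^{00}_\infty$ is an injection; your argument omits this point entirely.
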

\begin{proof}
  Introduce an auxilliary grading on the sheaf of algebras $\CA$. The
  structure sheaf $\CO$ of the manifold $\Phase$ is placed in degree
  $0$, and the generators of $\CA$ over $\CO$ are assigned the degrees
  in the following table:
  \begin{equation*}
    \begin{tabu}{|c||c|c|c|c|c|c|c|c|c|} \hline
      \Phi & \p^\ell x^\mu & \p^\ell p_\mu & \p^\ell e & e^{-1} &
      \p^\ell c & \p^\ell x^+_\mu & \p^\ell p^{+\mu} & \p^\ell e^+ &
      \p^\ell c^+ \\ \hline
      \deg(\Phi) & 0 & 0 & 0 & 0 & 3 & 0 & 0 & -1 & -1 \\ \hline
    \end{tabu}
  \end{equation*}

  Write $\gh(f)=\gh_+(f)-\gh_-(f)$, where $\gh_+(f)$ and $\gh_-(f)$
  are the contributions of the fields, respectively antifields, to the
  ghost number. Rearranging, we see that
  \begin{equation*}
    \gh_-(f) = \gh_+(f) - \gh(f) .
  \end{equation*}
  Since
  \begin{equation*}
    \gh_+(f) + 2\gh(f) \le \deg(f) \le 3\gh_+(f) ,
  \end{equation*}
  we see that
  \begin{equation}
    \label{coerce}
     \tfrac{1}{3} \bigl( \deg(f) - 3 \gh(f) \bigr) \le \gh_-(f) \le
     \deg(f) - 3\gh(f) .
  \end{equation}
  From this grading, we construct an exhaustive and Hausdorff
  descending filtration on $\CA$: $G^k\CA^i$ is the span of elements
  $f\in\CA^i$ such that $\deg(f)\ge k$. By \eqref{coerce}, the
  completion of this filtration is isomorphic to $\Ahat$.

  We now consider the spectral sequence for the filtration induced by
  $G$ on $\Ahat$. We will show that $E^{pq}_\infty=E^{pq}_2$, that the
  sheaf $E^{pq}_2$ vanishes if $p+q<0$, and that it is supported on
  the light-cone $\Cone$. This establishes the theorem.

  The differential $s_0$ of the zeroth page $E^{pq}_0$ of the spectral
  sequence equals
  \begin{equation*}
    s_0 = \pr \left( \bigl( \p x^\mu - \eta^{\mu\nu} e p_\nu \bigr)
      \, \frac{\p~}{\p p^{+\mu}} - \p p_\mu \, \frac{\p~}{\p x^+_\mu}
      + \p e^+ \, \frac{\p~}{\p c^+} \right) .
  \end{equation*}
  This is a Koszul differential and its cohomology $E_1$ is the graded
  commutative algebra generated over $\CO$ by the variables
  \begin{equation*}
    \{\p^\ell e,e^{-1},\p^\ell c,e^+\}_{\ell\ge0} .
  \end{equation*}
  
  The differential $s_1$ of the first page $E_1$ of the spectral
  sequence equals
  \begin{equation*}
    s_1 = \pr \left( - \tfrac{1}{2} \eta^{\mu\nu} p_\mu p_\nu \,
      \frac{\p~}{\p e^+} \right) .
  \end{equation*}
  The element $\eta^{\mu\nu}p_\mu p_\nu\in E^{00}_1$ is not a zero
  divisor in $E_1$. We conclude that $E_2$ vanishes in negative
  degrees and is concentrated on the zero-locus of
  $\eta^{\mu\nu}p_\mu p_\nu$ in $\Phase$, namely the light-cone
  $\Cone$.

  We see that the second page $E_2$ of the spectral sequence is a
  graded commutative algebra, generated over $\CO_C$ by the variables
  \begin{align*}
    \{\p^\ell e,e^{-1}\}_{\ell\ge0} &\in E_2^{00} , &
    \{\p^\ell c\}_{\ell\ge0} &\in E_2^{3,-2} .
  \end{align*}
  We see that $s_2$ vanishes, so that $E_3^{pq}=E_2^{pq}$, and that
  the differential $s_3$ of the third page $E_3^{pq}$ equals
  \begin{equation*}
    s_3 = \pr \left( - \eta^{\mu\nu} c p_\mu \, \frac{\p~}{\p x^\nu} +
      \p c \, \frac{\p~}{\p e} \right) .
  \end{equation*}
  Using an auxilliary filtration, we see that the differentials $s_r$
  vanish for $r>3$, and $E_3^{pq}$ is quasi-isomorphic to the quotient
  complex obtained by taking the variables $\{\p^\ell e,\p^\ell
  c\}_{\ell>0}$ to $0$ and the variables $e$ and $e^{-1}$ to $1$:
  \begin{equation*}
    \begin{tikzcd}[column sep=2.7em]
      0 \ar{r} & \CO_C \ar{rrr}{\textstyle-\eta^{\mu\nu}cp_\mu\p/\p
        x^\nu} & & &
      c\,\CO_C \ar{r} & 0 .
    \end{tikzcd}
  \end{equation*}

  Turning to the case of the sheaf $\tCA$, we have a long exact
  sequence for cohomology sheaves
  \begin{equation*}
    0 \longrightarrow H^{-1}(\tCA,s) \longrightarrow \R
    \longrightarrow H^0(\Ahat,s) \longrightarrow \cdots
  \end{equation*}
  But the above proof shows that the morphism $\R\to E^{00}_\infty$ is
  an injection, and hence that $H^{-1}(\tCA,s)=0$.
\end{proof}

\begin{corollary}
  \label{particle:main}
  Let $\CF=\Ahat/\p\Ahat$. The cohomology sheaf $H^i(\CF,\s)$ vanishes
  for $i<-1$, and is concentrated on the light-cone $\Cone$.
\end{corollary}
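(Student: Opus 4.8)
The plan is to relate the functional complex $\CF = \Ahat/\p\Ahat$ to the complex $\Ahat$ computed in Theorem~\ref{particle:vanish} by means of the total derivative $\p$. Since $\s$ is an evolutionary vector field it commutes with $\p$; hence $\p\colon\Ahat\to\Ahat$ is a map of complexes, $\p\Ahat$ is an $\s$-stable subcomplex, and we obtain a short exact sequence of complexes
\begin{equation*}
  0 \longrightarrow \p\Ahat \longrightarrow \Ahat \longrightarrow \CF \longrightarrow 0 .
\end{equation*}

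First I would identify the subcomplex $\p\Ahat$. The kernel of the total derivative $\p$ acting on $\Ahat$ consists precisely of the constants $\R$: a local function annihilated by $\p$ is independent of $t$, hence constant. Thus $\p$ induces an isomorphism of complexes $\tCA = \Ahat/\R \xrightarrow{\sim} \p\Ahat$, where $\R$ carries the zero differential. Substituting, the short exact sequence becomes
\begin{equation*}
  0 \longrightarrow \tCA \longrightarrow \Ahat \longrightarrow \CF \longrightarrow 0 ,
\end{equation*}
with the first arrow induced by $\p$.

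Next I would write down the associated long exact sequence of cohomology sheaves,
\begin{equation*}
  \cdots \to H^i(\tCA,\s) \to H^i(\Ahat,\s) \to H^i(\CF,\s) \to H^{i+1}(\tCA,\s) \to \cdots ,
\end{equation*}
and invoke Theorem~\ref{particle:vanish}, which gives $H^i(\Ahat,\s)=0$ and $H^i(\tCA,\s)=0$ for $i<0$, both concentrated on the light-cone $\Cone$. For $i<-1$ the two flanking groups $H^i(\Ahat,\s)$ and $H^{i+1}(\tCA,\s)$ vanish (as $i<0$ and $i+1<0$ respectively), so $H^i(\CF,\s)=0$. Away from $\Cone$ every cohomology sheaf of $\Ahat$ and of $\tCA$ vanishes in all degrees, so the long exact sequence forces $H^i(\CF,\s)$ to vanish there as well; hence $H^i(\CF,\s)$ is concentrated on the light-cone. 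The remaining group $H^{-1}(\CF,\s)$ injects into $H^0(\tCA,\s)$, consistent with being supported on $\Cone$.

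The main obstacle is the identification $\ker\p=\R$ on the completed algebra $\Ahat$, that is, the algebraic Poincar\'e lemma for the one-dimensional horizontal complex, together with the verification that the resulting isomorphism $\tCA\cong\p\Ahat$ is compatible with $\s$ and with the filtration-completion defining $\Ahat$. Once this is in hand, the long exact sequence argument is purely formal.
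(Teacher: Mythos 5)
Your proof is correct and takes essentially the same route as the paper: the paper's own argument is exactly the short exact sequence $0 \to \tCA \overset{\p}{\longrightarrow} \Ahat \longrightarrow \CF \longrightarrow 0$ followed by the long exact sequence in cohomology, with Theorem~\ref{particle:vanish} supplying the vanishing and the support on $\Cone$. Your extra steps (identifying $\ker\p=\R$ so that $\tCA\cong\p\Ahat$, and reading off the concentration on the light-cone from the same long exact sequence) merely spell out details the paper leaves implicit.
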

\begin{proof}
  The sheaf $\CF$ has a resolution
  \begin{equation*}
    0 \longrightarrow \tCA \overset{\p}{\longrightarrow}
    \Ahat \longrightarrow \CF \longrightarrow 0 .
  \end{equation*}
  The associated long exact sequence implies that $H^i(M,\CF) = 0$ for
  $i<-1$.
\end{proof}

In \cite{covariant}, we show that the covariance of a field theory
with respect to reparametrization of the worldline may be expressed by
introducing the graded Lie superalgebra $\CF[[u]]$ of power series in
a formal variable $u$ of ghost number $2$ and even parity. Consider
the global section of $\CA$
\begin{equation*}
  D = x^+_\mu \p x^\mu + p^{+\mu} \p p_\mu - e \p e^+ + c^+ \p c ,
\end{equation*}
of ghost number $-1$ and odd parity. Its associated Hamiltonian vector
field is $\p$, and its image $\tint D$ in $\CF$ is central. Let
$G=x^+_\mu p^{+\mu} + e c^+$. Covariance of the theory is expressed by
the curved Maurer--Cartan equation
\begin{equation*}
  \half ( \tint S_u , \tint S_u ) = - u \tint D ,
\end{equation*}
where
\begin{equation*}
  S_u = S + u G .
\end{equation*}
If $\int f\in\CF^k$ is a cocycle, then
\begin{equation*}
  (\tint G,\tint f)\in\CF^{k-1}
\end{equation*}
is again a cocycle, called the \textbf{transgression} of $f$. In
particular, the long exact sequence
\begin{equation*}
  \begin{tikzcd}
    \cdots \arrow{r}{\p} & H^{-1}(\Ahat,s) \arrow{r} & H^{-1}(\CF,s)
    \arrow{r} & H^0(\tCA,s)
    \arrow[out=-5,in=170,overlay]{lld}[']{\p} & \\
    & H^0(\Ahat,s) \arrow{r} & H^0(\CF,s) \arrow{r} & H^1(\tCA,s)
    \arrow{r}{\p} & \cdots
  \end{tikzcd}
%  \begin{tikzcd}
%    \cdots \arrow{r} & H^{-1}(\tCA,s) \arrow{r}{\p} & H^{-1}(\CA,s)
%    \arrow{r} & H^{-1}(\CF,s) \arrow[out=-5,in=170,overlay]{lld} & \\
%    & H^0(\tCA,s) \arrow{r}{\p} & H^0(\CA,s) \arrow{r} & H^0(\CF,s)
%    \arrow[out=-5,in=170,overlay]{lld} & \\
%    & H^1(\tCA,s) \arrow{r}{\p} & H^1(\CA,s) \arrow{r} & H^1(\CF,s)
%    \arrow{r} & \cdots
%  \end{tikzcd}
\end{equation*}
splits, in the sense that the morphisms $\p$ vanish.

The particle is actually an AKSZ model (Alexandrov et al.\
\cite{AKSZ}), associated to the symplectic supermanifold
$T^\vee(\V\times\R[1])$, and $G$ may be interpreted as the Poisson
tensor on this supermanifold. The main result of this paper is to find
the analogue of $S_u$ for the superparticle, establishing that the
superparticle is a ``generalized AKSZ model.''

\section{The superparticle}

The Batalin--Vilkovisky action $\S$ for the superparticle is based on
the above action for the particle, in the special case where
$\V=\R^{9,1}$. Recall some properties of Majorana--Weyl spinors in
this signature of space-time: for further details, see the
Appendix. The spin group $\Spin(9,1)$ is the universal cover of the
identity component of $\SO(9,1)$. It has two real irreducible
sixteen-dimensional representations, the left and right-handed
Majorana--Weyl spinors $\Ss _+$ and $\Ss_-$. The $\gamma$-matrices
$\gamma^\mu:\Ss_\pm\to\Ss_\mp$ satisfy the relations
\begin{equation*}
  \gamma^\mu\gamma^\nu + \gamma^\nu\gamma^\mu = 2\eta^{\mu\nu} .
\end{equation*}
The Lie algebra of the group $\Spin(9,1)$ is spanned by the quadratic
expressions
\begin{equation*}
  \gamma^{\mu\nu} = \half \bigl( \gamma^\mu\gamma^\nu -
  \gamma^\nu\gamma^\mu \bigr) .
\end{equation*}
There is a non-degenerate symmetric bilinear form $\T(\alpha,\beta)$
on $\Ss$, which vanishes on $\Ss_\pm\otimes\Ss_\pm$ and places $\Ss_\pm$ in
duality with $\Ss_\mp$. We have
\begin{equation*}
  \T^\mu(\alpha,\beta) = \T( \gamma^\mu\alpha , \beta ) = \T( \alpha ,
  \gamma^\mu \beta ) .
\end{equation*}
In particular, we see that
\begin{equation*}
  \T( \gamma^{\mu\nu}\alpha , \beta ) = - \T( \alpha ,
  \gamma^{\mu\nu} \beta ) .
\end{equation*}
Hence, the pairing $\T(\alpha,\beta)$ is invariant under the action of
the Lie group $\Spin(9,1)$, and, in particular,
$\Ss_-\cong(\Ss_+)^\vee$ as a representation of $\Spin(9,1)$.

To define the superparticle, we adjoin to the particle a series of
fields $\theta_n$, $n\ge0$, of ghost number $n$, which are left-handed
Majorana--Weyl spinors if $n$ is even, and right-handed Majorana--Weyl
spinors if $n$ is odd: the parity of $\theta_n$ is the opposite of the
parity of $n$. As functions on the worldline, these fields all
transform as scalars.

For the correct definition of the superparticle, it is necessary to
exclude the states of vanishing momentum. To this end, we let
$\Phase_0$ be the complement in $\Phase=T^\vee\V$ of the
zero-section. Denote by $j: M_0 \to M$ the open embedding, and by
$\CO_0=j^*\CO$ the structure sheaf of $\Phase_0$. Denote by $\AA$ the
algebra generated over $\CA_0=j^*\CA$ by the variables
\begin{equation*}
  \{\p^\ell\theta_n,\p^\ell\theta^+_n \mid n\ge0 \}_{\ell\ge0} .
\end{equation*}
We denote the completion of $\AA$ with respect to antifields by
$\AAhat$.

We see that $\AA$ is a sheaf of graded commutative algebras over
$\Phase_0$. Let
\begin{equation*}
  \Cone_0 = \Cone \cap \Phase_0
\end{equation*}
be the intersection of the light-cone with open submanifold $\Phase_0$
of the cotangent bundle.

Consider the composite fields
\begin{equation}
  \label{Psi}
  \Psi_n =
  \begin{cases}
    (-1)^{\binom{n+1}{2}} \, \theta^+_{-n-1} , & n<-1 , \\[3pt]
    \theta^+_0 + \half x^+_\mu \gamma^\mu \theta_0 + 2c^+\theta_1 , &
    n=-1 , \\[3pt]
    \p \theta_n + (-1)^{n+1} x^+_\mu \gamma^\mu \theta_{n+1} +
    2c^+\theta_{n+2} , & n\ge0 .
  \end{cases}
\end{equation}

Denote by $\S$ the full Batalin--Vilkovisky action of the
superparticle, and by $\s$ the associated Batalin--Vilkovisky
differential. The formula for $\S$ may be found in Lindstr\"om et al.\
\cite{LRSVV}: we content ourselves here with the following
characterization, in terms of the differential $\s$.

\begin{proposition}
  The Batalin--Vilkovisky action $\S$ of the superparticle is
  characterized by the following conditions:
  \begin{enumerate}[i)]
  \item $\S$ satisfies the classical master equation;
  \item $\S=S+S'$, where $S$ is the Batalin--Vilkovisky action of the
    particle
    \begin{equation*}
      S = p_\mu \p x^\mu - \half e \eta^{\mu\nu} p_\mu p_\nu + \bigl(
      \p e^+ - \eta^{\mu\nu} x^+_\mu p_\nu \bigr) c ,
    \end{equation*}
    and $S'$ depends only on the fields and antifields
    $\{p_\mu,\theta_n,x^+_\mu,e^+,c^+,\theta^+_n\}$ and their
    derivatives;
  \item for all $n\in\Z$, the differential $\s$ acts on the composite
    fields $\Psi_n$ as follows:
    \begin{equation*}
      \s\Psi_n = (-1)^{n+1} p_\mu \gamma^\mu \Psi_{n+1} - 2e^+
      \Psi_{n+2} .
    \end{equation*}
  \end{enumerate}
\end{proposition}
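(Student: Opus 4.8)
The plan is to treat the proposition as two assertions: that the action $\S$ of Lindstr\"om et al.\ satisfies (i)--(iii), and that these three conditions determine $\S$. Condition (i) is exactly the master equation established in \cite{LRSVV}, and condition (ii) can be read off from the explicit formula there, so the substance of the proof lies in condition (iii) together with the uniqueness claim. I would first record the consequences of (ii) for the differential. Since $S'$ involves none of $x^\mu$, $e$, $p^{+\mu}$ or $c$, the variational derivatives $\delta^{p^{+\mu}}\S$, $\delta_e\S$, $\delta_{x^\mu}\S$ and $\delta_c\S$ reduce to those of the particle action $S$; hence $\s$ acts on the fields occurring in the $\Psi_n$ exactly as in the particle:
\begin{equation*}
  \s p_\mu = 0, \quad \s x^+_\mu = -\p p_\mu, \quad \s e^+ =
  -\half\eta^{\mu\nu}p_\mu p_\nu, \quad \s c^+ = \p e^+ -
  \eta^{\mu\nu}x^+_\mu p_\nu .
\end{equation*}
The only genuinely new data are $\s\theta_n = \pm\,\delta^{\theta_n}S'$ and $\s\theta^+_n = \pm\,\delta_{\theta_n}S'$, which I would extract from the formula for $S'$.

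With these in hand I would verify (iii) by splitting into the three ranges of \eqref{Psi}. For $n<-1$ the $\Psi_n$ are rescaled antifields, so (iii) is precisely the statement of how $\s$ acts on $\theta^+_m$ for $m\ge1$; here the generalized binomial signs $(-1)^{\binom{n+1}{2}}$ are exactly what is needed to make the recursion uniform, and checking this is a bookkeeping exercise in the values of $\binom{n+1}{2}\bmod 2$. For $n=-1$ and $n\ge0$ the $\Psi_n$ mix fields and antifields, and one must distribute $\s$ as a graded derivation over the products $x^+_\mu\gamma^\mu\theta_{n+1}$ and $c^+\theta_{n+2}$, using the four relations above together with the extracted formulas for $\s\theta_n$; the target $(-1)^{n+1}p_\mu\gamma^\mu\Psi_{n+1}-2e^+\Psi_{n+2}$ is then matched term by term. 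The delicate points are the two boundaries $n=-1$ and $n=0$, where the defining expression for $\Psi_n$ changes shape.

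A structural check that both organizes this computation and explains the chosen coefficients is to apply $\s$ to the right-hand side of (iii) itself. Because $\s$ is a derivation with $\s p_\mu=0$, $(\gamma^\mu p_\mu)(\gamma^\nu p_\nu)=\eta^{\mu\nu}p_\mu p_\nu$, $\s e^+=-\half\eta^{\mu\nu}p_\mu p_\nu$ and $(e^+)^2=0$, the two iterated contributions in
\begin{equation*}
  (-1)^{n+1}p_\mu\gamma^\mu\,\s\Psi_{n+1} - 2\,\s\bigl(e^+\Psi_{n+2}\bigr)
\end{equation*}
cancel identically, so $\s^2\Psi_n=0$ holds automatically once the recursion does. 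This identity is the reason the coefficients $p_\mu\gamma^\mu$ and $-2e^+$ and the signs in \eqref{Psi} are forced, and I would use it as the guiding invariant throughout the case analysis.

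For uniqueness I would argue that (ii) and (iii) pin down $\s$. Condition (ii) fixes $\s$ on $p_\mu,x^+_\mu,e^+,c^+$ and reduces $\s\theta_n,\s\theta^+_n$ to variational derivatives of $S'$, while (iii) determines precisely these derivatives; the remaining components $\s x^\mu,\s e,\s p^{+\mu},\s c$ are then forced by $\s^2=0$, that is, by (i). Since $\S$ is recovered from its Hamiltonian vector field $\s$ up to a central element, and the relevant center in ghost number $0$ and even parity is spanned by constants, $\S$ is determined up to an irrelevant additive constant. I expect the main obstacle to be the sign bookkeeping in the verification of (iii)---the generalized binomial signs for $n<-1$ and the matching across the boundaries $n=-1,0$---together with the integration-by-parts ambiguity in reconstructing $S'$ from its variational derivatives, which I would control using the ghost-number grading and condition (ii).
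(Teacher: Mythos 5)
First, a point of comparison: the paper never proves this proposition. It is stated as a characterization, with the explicit formula for $\S$ deferred to Lindstr\"om et al.\ \cite{LRSVV}, and the formulas for $\s$ on the fields and antifields displayed immediately afterwards are asserted rather than derived. So your proposal can only be judged on its own merits. Its verification half is sound: deducing from (ii) that $\s$ agrees with the particle differential on $\{p_\mu,x^+_\mu,e^+,c^+\}$ is correct (it matches the formulas the paper records), and your observation that the recursion (iii) is automatically compatible with $\s^2=0$ --- the cross terms cancel using $(p_\mu\gamma^\mu)^2=\eta^{\mu\nu}p_\mu p_\nu$, $\s e^+=-\half\eta^{\mu\nu}p_\mu p_\nu$ and $(e^+)^2=0$ --- is correct and is the right way to organize the sign bookkeeping.

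The uniqueness half, however, contains a genuine gap: the components $\s x^\mu$, $\s e$, $\s c$, $\s p^{+\mu}$ are \emph{not} forced by $\s^2=0$. Concretely, let $f=\eta^{\mu\nu}p_\mu p_\nu$, or any even local function of ghost number zero built from $p_\mu$ and its derivatives alone. Then $\S+f$ still satisfies (i): since $f$ involves no antifields, $(\tint f,\tint f)=0$, and $(\tint\S,\tint f)=\pm\tint\s f=0$ because $\s p_\mu=0$. It satisfies (ii), since $S'+f$ depends only on the allowed variables. And it satisfies (iii), because the Hamiltonian vector field of $\tint f$ displaces only $\s p^{+\mu}$, and no $\Psi_n$ involves $p^{+\mu}$. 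The shifted differential is still square-zero and still satisfies (i)--(iii), so the premise of your final step --- that these conditions pin down $\s$, with nilpotency filling in the remaining components --- fails, and no appeal to $\s^2=0$ or to the center of the antibracket can repair it. Any correct uniqueness argument must invoke the implicit normalization that every monomial of $S'$ contains some $\theta_n$ or $\theta^+_n$ (as the action of \cite{LRSVV} does); with that normalization, once (iii) fixes the $\theta$- and $\theta^+$-variations of $S'$, the residual ambiguity is a ghost-number-zero functional of $\{p_\mu,x^+_\mu,e^+,c^+\}$, hence of $p_\mu$ alone, which the normalization kills. A secondary, fixable, gap: (iii) does not directly hand you $\s\theta_n$; for $n\ge0$ it determines $\p(\s\theta_n)$ in terms of $\s\theta_{n+1}$ and $\s\theta_{n+2}$, an ascending recursion that never terminates, and closing it requires an induction on antifield number in the completion $\AAhat$, which your sketch does not mention.
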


Using the vanishing result Theorem \ref{particle:vanish}, we now
establish the analogous result for the superparticle.
\begin{theorem}
  \label{thm:vanish}
  The sheaf $H^i(\AAhat, \s)$ vanishes for $i<0$, and is concentrated
  on the light-cone $\Cone_0$.

  Let $\tAA$ be the quotient of $\AAhat$ by constant multiples of the
  identity. The sheaf $H^i(\tAA,\s)$ also vanishes for $i<0$, and is
  concentrated on the light-cone $\Cone_0$
\end{theorem}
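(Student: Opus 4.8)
The plan is to follow the proof of Theorem~\ref{particle:vanish}: I would build an exhaustive, Hausdorff, complete filtration on $\AAhat$ whose spectral sequence I can compute and which degenerates, so that the convergence criterion of Boardman applies. The first move is a change of generators in the spinor sector. The composite fields of \eqref{Psi} give a triangular --- hence invertible --- change from $\{\p^\ell\theta_n,\p^\ell\theta^+_n\}$ to $\{\theta_n\}_{n\ge0}\cup\{\p^\ell\Psi_n\}_{n\in\Z,\ell\ge0}$: for $n\ge0$ the leading term of $\Psi_n$ is $\p\theta_n$, so the jets $\p^\ell\Psi_n$ with $n\ge0$ recover the positive jets of the fields, while for $n\le-1$ the leading term is $\pm\theta^+_{-n-1}$, so the $\p^\ell\Psi_n$ with $n\le-1$ recover the antifields and their jets; the undifferentiated fields $\theta_n$ are kept separately. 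I would then extend the auxiliary grading of Theorem~\ref{particle:vanish} by setting $\deg\theta_n=\deg\Psi_n=2n$, the completion of the resulting filtration being again $\AAhat$.

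On the associated graded the computation organizes into pages. The particle variables carry exactly the differentials of Theorem~\ref{particle:vanish}: $s_0$ is the particle Koszul differential, $s_1$ sends $e^+\mapsto-\half\eta^{\mu\nu}p_\mu p_\nu$, and $s_3$ is the reparametrization differential; after $s_0$ and $s_1$ the particle factor has become $\CO_{\Cone_0}[e,e^{-1},c]$ and we are localized to the light-cone, where $\eta^{\mu\nu}p_\mu p_\nu=0$. The spinor sector is inert on these pages, and the first differential it sees is $s_2$. By the relation $\s\Psi_n=(-1)^{n+1}p_\mu\gamma^\mu\Psi_{n+1}-2e^+\Psi_{n+2}$, together with the Clifford relation $\gamma^\mu\gamma^\nu+\gamma^\nu\gamma^\mu=2\eta^{\mu\nu}$ giving $(p_\mu\gamma^\mu)^2=\eta^{\mu\nu}p_\mu p_\nu$, the term $-2e^+\Psi_{n+2}$ is pushed to $s_3$ (where $e^+$ has already been killed), so $s_2$ is precisely the Dirac differential $\Psi_n\mapsto(-1)^{n+1}p_\mu\gamma^\mu\Psi_{n+1}$ on the bi-infinite tower and $\theta_n\mapsto(-1)^{n+1}p_\mu\gamma^\mu\theta_{n+1}$ on the semi-infinite tower of undifferentiated fields.

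I would compute the $s_2$-cohomology over $\CO_{\Cone_0}$. On $\Cone_0$ the momentum $p$ is null and nonzero, so $p_\mu\gamma^\mu\colon\Ss_\pm\to\Ss_\mp$ has rank $8$, half the dimension of $\Ss_\pm$, whence $\im(p_\mu\gamma^\mu)=\ker(p_\mu\gamma^\mu)$. For the bi-infinite $\Psi$-tower this forces exactness in every ghost degree, so by a K\"unneth argument its free graded-commutative algebra contributes only the ground sheaf; for the semi-infinite $\theta$-tower the sole surviving class is $\ker(p_\mu\gamma^\mu\colon\Ss_+\to\Ss_-)$ at the bottom, a rank-$8$ sheaf of null spinors at ghost number $0$, supported on $\Cone_0$. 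Hence $E_3\cong\CO_{\Cone_0}[e,e^{-1},c]\otimes\Wedge\,\ker(p_\mu\gamma^\mu)$, and applying the remaining particle differential $s_3$ and Theorem~\ref{particle:vanish} identifies $E_\infty$ as the particle cohomology on $\Cone_0$ tensored with this ghost-$0$ exterior algebra: it is supported on $\Cone_0$ and vanishes below degree $0$, which proves the statement for $\AAhat$. The identity survives to $E^{00}_\infty$, so the statement for $\tAA$ follows, exactly as in Theorem~\ref{particle:vanish}, from the long exact sequence $0\to H^{-1}(\tAA,\s)\to\R\to H^0(\AAhat,\s)\to\cdots$ and the injectivity of $\R\to E^{00}_\infty$.

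The step I expect to be the main obstacle is the filtration itself. Since $\theta_n$ has ghost number $n$ and $\theta^+_n$ ghost number $-1-n$, both unbounded, I must check that the chosen degrees make the filtration simultaneously exhaustive, Hausdorff and complete with completion $\AAhat$ --- the analogue over an infinite tower of the coercivity estimate~\eqref{coerce}. Related to this is the composite nature of $\Psi_n$: I must verify that, at the leading pages, the Dirac differential genuinely decouples from the particle differentials and from the $\p$-prolongation corrections hidden inside $\s\Psi_n$, and that exactness of the bi-infinite Dirac complex is inherited by its jet prolongation. I expect to control the latter by prolonging a local contracting homotopy, which exists because the rank-$8$ images and kernels of $p_\mu\gamma^\mu$ are locally free over $\CO_{\Cone_0}$.
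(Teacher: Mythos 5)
Your overall strategy is the paper's: pass to the generators $\{\theta_n\}\cup\{\p^\ell\Psi_n\}_{n\in\Z}$, filter by an auxiliary grading, run the spectral sequence (Koszul page, localization to $\Cone_0$, Dirac differential with contracting homotopy $q_\mu\gamma^\mu$, which is exactly where $p\ne0$ enters), and invoke Boardman. But the one step you yourself flag as the main obstacle --- the choice of degrees --- is precisely where your proposal fails, and for your choice $\deg\theta_n=\deg\Psi_n=2n$ it cannot be repaired. First, the completion problem: the paper needs the two-sided estimate \eqref{coerce:superparticle}, whose upper half rests on the inequality $\gh_+(f)+2\gh(f)\le\deg(f)$, i.e.\ $3\gh_+(f)-2\gh_-(f)\le\deg(f)$ on generators. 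With your degrees this fails already for $\theta_1$ ($3\cdot1\not\le2$), and concretely: the ghost-number-$0$ scalars $\bigl(p_\mu\T(\gamma^\mu\theta_1,\Psi_{-1})\bigr)^N$ (both $\theta_1$ and $\Psi_{-1}=\theta^+_0+\cdots$ are bosonic, so these do not vanish) have degree $0$ for every $N$ but antifield number $\gh_-=N$ unbounded. Hence your degree filtration is not cofinal with the antifield filtration: the antifield-adically convergent series $\sum_N\bigl(p_\mu\T(\gamma^\mu\theta_1,\Psi_{-1})\bigr)^N\in\AAhat^0$ does not converge degree-wise, the associated graded of your filtration is not the free algebra you compute with, and the spectral sequence is not manifestly one for $\AAhat$ at all. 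The paper's asymmetric degrees ($\deg\theta_n=3n+1$, $\deg\Psi_n=3n$ for $n\ge0$, $\deg\Psi_{-n}=-2n$) are tuned exactly so that both coercivity inequalities hold.

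Second, even setting convergence aside, your description of the pages is incorrect for your grading, because the particle--spinor cross terms in $\s$ do not decouple. With $\deg c=3$ and $\deg\theta_1=2$, the term $-p_\mu\T^\mu(\theta_1,\theta_1)$ in $\s c$ has degree jump $4-3=1$, so it sits in $\s_1$ alongside $e^+\mapsto-\tfrac12\eta^{\mu\nu}p_\mu p_\nu$; thus $c$ does not survive as a free generator and $E_2$ is not $\CO_{\Cone_0}[e,e^{-1},c]$ tensored with the spinor algebra. Likewise $\s x^\mu\ni\tfrac12 p_\nu\T(\gamma^\nu\gamma^\mu\theta_0,\theta_1)$ and $\s e\ni2\sum_n(-1)^{\binom n2}\T(\Psi_{-n},\theta_{n+1})$ both have jump $2$ in your grading, so $\s_2$ is not ``precisely the Dirac differential'': it also moves $x^\mu$ and $e$ by spinor bilinears, destroying the K\"unneth splitting on which your computation of $E_3$ and $E_\infty$ rests. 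The paper's degrees push every such cross term to page $\ge3$ and arrange that the negative $\Psi$-tower is killed at page $2$, before anything else acts; after that the paper does not compute $E_\infty$ at all, but only establishes the positional bounds $p\ge0$, $p+q\ge0$, $3p+4q\ge-16$ (the last saturated by the $16$ components of $\theta_0$) together with degeneration, which is all Boardman requires. So the gap is not a missing verification but a wrong choice of grading; with the paper's grading your outline essentially becomes the paper's proof, while your cleaner claimed answer $E_\infty\cong(\text{particle part})\otimes\Wedge\ker(p_\mu\gamma^\mu)$ is not established by either argument.
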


In the proof of Theorem \ref{thm:vanish}, we need the formula for the
differential $\s$ on fields and antifields of the theory. We see that
$\s$ equals $s$ on the fields and antifields
$\{p_\mu,x^+_\mu,e^+,c^+\}$, and
\begin{multline*}
  \s \theta_n = (-1)^{n+1} p_\mu \gamma^\mu \theta_{n+1} - 2e^+
  \theta_{n+2} \\
  \shoveleft{\s x^\mu = - \eta^{\mu\nu} cp_\nu + \half p_\nu
    \T(\gamma^\nu\gamma^\mu\theta_0,\theta_1) + e^+
    \T^\mu(\theta_1,\theta_1) - e^+ \T^\mu(\theta_0,\theta_2)} \\
  \shoveleft{\s c = - p_\mu \T^\mu(\theta_1,\theta_1) - 4e^+
    \T(\theta_1,\theta_2)} \\
  \shoveleft{\s e = -\p c + x^+_\mu \T^\mu(\theta_1,\theta_1) - 4 c^+
    \T(\theta_1,\theta_2) + 2 \sum_{n=0}^\infty (-1)^{\binom{n}{2}}
    \T(\Psi_{-n},\theta_{n+1})} \\
  \shoveleft{\s p^{+\mu} = \p x^\mu - \eta^{\mu\nu} ep_\nu + \half
    x^+_\nu \T(\gamma^\nu\gamma^\mu\theta_0,\theta_1)
    - c^+ \T^\mu(\theta_1,\theta_1) + c^+ \T^\mu(\theta_0,\theta_2) } \\
  + \half \T^\mu(\Psi_0,\theta_0) + \sum_{n=1}^\infty
  (-1)^{\binom{n}{2}} \T^\mu(\Psi_{-n},\theta_n) .
\end{multline*}
The infinite sums in the formulas for $\s e$ and $\s p^{+\mu}$ make
sense by the completeness property of $\AAhat$.

\begin{proof}[Proof of Theorem \ref{thm:vanish}]
  We define an auxilliary grading on $\AA$ extending the grading on
  $\CA_0$ used in the proof of Theorem~\ref{particle:vanish}: the
  generators of $\AA$ over $\CA_0$ are assigned the degrees in the
  following table:
  \begin{equation*}
    \begin{tabu}{|c||c|c|c|} \hline
      \Phi & \theta_n & \p^\ell\Psi_n & \p^\ell\Psi_{-n} \\ \hline
      \deg(\Phi) & 3n+1 & 3n & -2n \\ \hline
    \end{tabu}
  \end{equation*}
  Observe that
  \begin{equation*}
    \deg(f) \le 4\gh_+(f) + 16 ;
  \end{equation*}
  the factor $4$ accounts for the field $\theta_1$, which has ghost
  number $1$ and degree $4$, while the constant $16$ accounts for the
  $16$ modes of the fermionic field $\theta_0$, which have ghost
  number $0$ and degree $1$. In the other direction, we have
  \begin{equation*}
    \gh_+(f) + 2 \gh(f) \le \deg(f) .
  \end{equation*}
  Combining these two inequalities, we see that
  \begin{equation}
    \label{coerce:superparticle}
    \tfrac{1}{4} \deg(f) - \gh(f) - 4 \le \gh_-(f) \le \deg(f) -
    3\gh(f) .
  \end{equation}

  From this grading, we construct an exhaustive and Hausdorff
  descending filtration on $\AA$: $G^k\AA^i$ is the span of elements
  $f\in\AA^i$ such that $\deg(f)\ge k$. By
  \eqref{coerce:superparticle}, the completion of this filtration is
  isomorphic to $\AAhat$.

  The differential $\s_0$ on the zeroth page of the spectral sequence
  $E^{pq}_0$ equals
  \begin{equation*}
    \s_0 = \pr \left( \p x^\mu \, \frac{\p~}{\p p^{+\mu}}
      - \p p_\mu \, \frac{\p~}{\p x^+_\mu}
      - \p e^+ \, \frac{\p~}{\p c^+} \right) .
  \end{equation*}
  This is a Koszul differential and its cohomology $E_1$ is the graded
  commutative algebra generated over $\CO_0$ by the variables
  \begin{equation*}
    \{ \p^\ell e, e^{-1}, \p^\ell c , e^+ \}_{\ell\ge0} \cup \{
    \theta_n \mid n\ge0 \} \cup \{ \p^\ell\Psi_n \mid n\in\Z
    \}_{\ell\ge0} .
  \end{equation*}
  
  The differential $\s_1$ on the first page $E_1$ of the spectral
  sequence is given by the formula
  \begin{equation*}
    \s_1 = \pr \left( - \tfrac{1}{2} \eta^{\mu\nu} p_\mu p_\nu \,
      \frac{\p~}{\p e^+} \right) .
  \end{equation*}
  The element $\eta^{\mu\nu}p_\mu p_\nu$ is not a zero divisor in
  $E_1$: its zero-locus in $\Phase_0$ is the light-cone $\Cone_0$,
  with structure sheaf $\CO_{\Cone_0}$.

  We see that the second page $E_2$ of the spectral sequence is a
  graded commutative algebra generated over $\CO_{\Cone_0}$ by the
  variables
  \begin{equation*}
    \{ \p^\ell e , e^{-1} , \p^\ell c \}_{\ell\ge0} \cup \{
      \theta_n \mid n\ge0 \} \cup \{ \p^\ell\Psi_n \mid n\in\Z
      \}_{\ell\ge0} .
  \end{equation*}
  The differential $\s_2$ on the second page $E_2$ of the spectral
  sequence is given by the formula
  \begin{equation*}
    \s_2 = \sum_{n=1}^\infty (-1)^{n+1} \pr \left( p_\mu \T^\mu\left(
        \Psi_{1-n} , \frac{\p~}{\p\Psi_{-n}} \right) \right) .
  \end{equation*}
  On the light-cone $\Cone_0$, the operator
  \begin{equation*}
    p_\mu \gamma^\mu : \Ss_\pm \to \Ss_\mp
  \end{equation*}
  has square zero, since
  $(p_\mu\gamma^\mu)^2=\eta^{\mu\nu}p_\mu p_\nu=0$. The cohomology of
  this operator vanishes, in the sense that
  \begin{equation*}
    \ker(p_\mu \gamma^\mu) = \im(p_\mu\gamma^\mu) .
  \end{equation*}
  To see this, choose a vector $q_\mu$ such that
  $\eta^{\mu\nu}p_\mu q_\nu>0$: then $q_\mu\gamma^\mu$ yields a
  contracting homotopy for the differential $p_\mu\gamma^\mu$. (This
  is where in the proof we need to have localized away from the zero
  section of $\Phase$.)

  The third page $E^{pq}_3$ is generated over $\CO_{C_0}$ by the
  variables
  \begin{align*}
    \{\p^\ell e,e^{-1}\}_{\ell\ge0} & \in E^{00}_3 , &
    \{\theta_n \mid n\ge0\} & \in E^{3n+1,-2n-1}_3, \\
    \{\p^\ell c\}_{\ell\ge0} &\in E^{3,-2}_3 , &
    \{\p^\ell\Psi_n\mid n\ge0\}_{\ell\ge0} &\in E^{3n,-2n}_3 ,
  \end{align*}
  modulo relations
  \begin{equation*}
    \{ \p^\ell ( p_\mu\gamma^\mu\Psi_0 ) \}_{\ell\ge0} \in
    E^{00}_3 .
  \end{equation*}
  Thus, $E^{pq}_r$ vanishes unless $p\ge0$, $p+q\ge0$, and
  $3p+4q\ge-16$; this last inequality is saturated by the product of
  the $16$ components of the field $\theta_0$, located in
  $E^{16,-16}_0$. The differential $\s_r$ of the $r$th page of the
  spectral sequence vanishes for $r>20$, and hence the spectral
  sequence degenerates, proving the first part of the theorem.

  The proof of the vanishing of the cohomology sheaves $H^i(\tAA,\s)$
  follows the same lines as the proof of the analogous result for the
  particle.
\end{proof}

\begin{corollary}
  \label{superparticle:main}
  Let $\FF=\AAhat/\p\AAhat$. The sheaf $H^i(\FF,\s)$ vanishes for
  $i<-1$, and is concentrated on the light-cone $\Cone_0$.
\end{corollary}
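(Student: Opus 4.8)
The plan is to imitate the proof of Corollary~\ref{particle:main} for the particle, deducing the statement for $\FF$ from Theorem~\ref{thm:vanish} by means of a short exact sequence of complexes and its associated long exact sequence in $\s$-cohomology. The only new input over the particle case is to check that the resolution of $\FF$ by the total derivative still works in the presence of the spinor fields.

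First I would record that the sequence
\begin{equation*}
  0 \longrightarrow \tAA \overset{\p}{\longrightarrow} \AAhat
  \longrightarrow \FF \longrightarrow 0
\end{equation*}
is a short exact sequence of complexes. Exactness on the right is the definition $\FF=\AAhat/\p\AAhat$. Exactness in the middle and on the left requires that the kernel of $\p$ on $\AAhat$ consist precisely of the constant multiples of the identity, so that $\p$ is injective on the quotient $\tAA=\AAhat/\R$; this is the standard fact that the only $\p$-closed densities are the constants, and it is the reason for passing to $\tAA$ rather than $\AAhat$ on the left. That $\p$ is a chain map, i.e. that $\s\p=\p\s$, follows because $\s$ is an evolutionary Hamiltonian vector field and hence commutes with the total derivative $\p$; equivalently, the Soloviev antibracket is linear over $\p$. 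All three maps preserve ghost number, so the connecting homomorphism raises degree by one.

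The resulting long exact sequence reads
\begin{equation*}
  \cdots \longrightarrow H^i(\AAhat,\s) \longrightarrow H^i(\FF,\s)
  \longrightarrow H^{i+1}(\tAA,\s) \overset{\p}{\longrightarrow} \cdots .
\end{equation*}
For $i<-1$ both flanking groups vanish: $H^i(\AAhat,\s)=0$ since $i<0$, and $H^{i+1}(\tAA,\s)=0$ since $i+1<0$, by the two parts of Theorem~\ref{thm:vanish}. Hence $H^i(\FF,\s)=0$ for $i<-1$. For the support statement, the same theorem shows that every group $H^i(\AAhat,\s)$ and $H^{i+1}(\tAA,\s)$ is concentrated on $\Cone_0$; since $H^i(\FF,\s)$ sits between two such sheaves in an exact sequence, it too is supported on $\Cone_0$. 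The hard part is the exactness of the defining sequence at $\tAA$, namely the identification of $\ker\p$ with the constants; once that is in hand the argument is purely homological, and the drop of the vanishing bound from $i<0$ to $i<-1$ is dictated precisely by the degree shift in the connecting map.
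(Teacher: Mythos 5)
Your proposal is correct and is essentially the paper's own argument: the paper deduces the corollary (exactly as in the particle case, Corollary~\ref{particle:main}) from the short exact sequence $0 \to \tAA \overset{\p}{\longrightarrow} \AAhat \to \FF \to 0$ and the long exact sequence in $\s$-cohomology, using both vanishing statements of Theorem~\ref{thm:vanish}. Your write-up in fact supplies details the paper leaves implicit, namely that $\s$ commutes with $\p$ and that $\ker\p$ consists of the constants.
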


\section{The Thom--Whitney normalization}

Let $X$ be a manifold with cover
\begin{equation*}
  \CU =  \{ U_\alpha \}_{\alpha\in I} .
\end{equation*}
The nerve $N_k\CU$ of the cover is the sequence of manifolds indexed
by $k\ge0$
\begin{equation*}
  N_k\CU = \bigsqcup_{\alpha_0\dots\alpha_k\in I^{k+1}}
  U_{\alpha_0\dots\alpha_k} ,
\end{equation*}
where
\begin{equation*}
  U_{\alpha_0\dots\alpha_k} = U_{\alpha_0} \cap \dots \cap U_{\alpha_k} .
\end{equation*}
Denote by $\epsilon:N_0\CU\to X$ the map which on each summand
$U_\alpha$ equals the inclusion $U\hookrightarrow X$.

In order to globalize the classical master equation, we have to
replace the manifold $X$ by a sequence of manifolds of the form
$\{N_k\CU\}$. To do this, we will use the formalism of simplicial and
cosimplicial objects, and we now review their definition.

Let $\Delta$ be the category whose objects are the totally ordered
sets
\begin{equation*}
  [k] = (0<\dots<k) , \quad k\in\N ,
\end{equation*}
and whose morphisms are the order-preserving functions. A simplicial
manifold $M_\bull$ is a contravariant functor from $\Delta$ to the
category of manifolds. (We leave open here whether we are working in
the smooth, analytic or algebraic setting.)  Here, $M_k$ is the value
of $M_\bull$ at the object $[k]$, and $f^*:M_\ell\to M_k$ is the
action of the arrow $f:[k]\to[\ell]$ of $\Delta$. The arrow
$d_i:[k]\to[k+1]$ which takes $j<i$ to $j$ and $j\ge i$ to $j+1$ is
known as a face map, while the arrow $s_i:[k]\to[k-1]$ which takes
$j\le i$ to $j$ and $j>i$ to $j-1$ is known as a degeneracy map.

The simplicial manifolds used in this paper are the \Cech\ nerves
$N_\bull\CU$ of covers $\CU=\{U_\alpha\}_{\alpha\in I}$. The face map
$\delta_i=d_i^*:N_{k+1}\CU\to N_k\CU$ corresponds to the inclusion of
the open subspace
\begin{equation*}
  U_{\alpha_0\dots\alpha_{k+1}} \subset N_{k+1}\CU
\end{equation*}
into the open subspace
\begin{equation*}
  U_{\alpha_0\dots\widehat{\alpha}_i\dots\alpha_{k+1}} \subset N_k\CU ,
\end{equation*}
and the degeneracy map $\sigma_i=s_i^*:N_{k-1}U\to N_kU$ corresponds
to the identification of the open subspace
\begin{equation*}
  U_{\alpha_0\dots\alpha_k} \subset N_k\CU
\end{equation*}
with the open subspace
\begin{equation*}
  U_{\alpha_0\dots\alpha_i\alpha_i\dots\alpha_{k+1}} \subset N_{k+1}\CU .
\end{equation*}
Any simplicial map $f^*:M_\ell\to M_k$ is the composition of a
sequence of face maps followed by a sequence of degeneracy maps. In
particular, we see that in the case $M_\bull=N_\bull\CU$ of the nerve
of a cover, all of these maps are local embeddings.

A covariant functor $X^\bull$ from $\Delta$ to a category $\CC$ is
called a cosimplicial object of $\CC$. These arise as the result of
applying a contravariant functor to a simplicial space: for example,
applying the functor $\CF(-)$ to the simplicial graded supermanifold
$N_\bull\CU$, we obtain the cosimplicial graded Lie superalgebra
\begin{equation*}
  \CF(N_\bull\CU)
\end{equation*}
with the Batalin--Vilkovisky antibracket.

We now generalize the classical master equation of Batalin--Vilkovisky
theory to a Maurer--Cartan equation for the cosimplicial graded Lie
superalgebra $\CF(N_\bull\CU)$. We use a construction introduced in
rational homotopy theory by Sullivan \cite{Sullivan} (see also
Bousfield and Guggenheim \cite{BG}), the Thom--Whitney normalization.

Let $\Om_k$ be the free graded commutative algebra with generators
$\{t_i\}_{i=0}^k$ of degree $0$ and $\{dt_i\}_{i=0}^k$ of degree $1$,
and relations
\begin{equation*}
  t_0 + \dotsb + t_k = 1
\end{equation*}
and $dt_0+\dotsb+dt_k=0$. There is a unique differential $\delta$ on
$\Om_k$ such that $\delta(t_i)=dt_i$, and $\delta(dt_i)=0$.

The differential graded commutative algebras $\Om_k$ are the
components of a simplicial differential graded commutative algebra
$\Om_\bull$ (that is, contravariant functor from $\Delta$ to the
category of differential graded commutative algebras): the arrow
$f:[k]\to[\ell]$ in $\Delta$ acts by the formula
\begin{equation*}
  f^*t_i = \sum_{f(j)=i} t_j , \quad 0\le i\le n .
\end{equation*}

The Thom--Whitney normalization $\Om_\bull \otimes_\Delta V^\bull$ of
a cosimplicial superspace is the equalizer of the maps
\begin{equation*}
  \begin{tikzcd}
    \displaystyle
    \prod_{k=0}^\infty \Om_k \otimes V^k
    \arrow[shift right]{r}[']{1\otimes f_*}
    \arrow[shift left]{r}{f^*\otimes1} &
    \displaystyle
      \prod_{k,\ell=0}^\infty \prod_{f:[k]\to[\ell]} \Om_k \otimes V^\ell
  \end{tikzcd}
\end{equation*}
If the superspaces $V^k$ making up the cosimplicial superspace are
themselves graded $V^{k\ast}$, the Thom--Whitney totalization of
$V^{\bull\ast}$ is the product superspace
\begin{equation*}
  \|V\|^n = \prod_{k=0}^\infty \Om^k_\bull \otimes_\Delta V^{\bull,n-k} .
\end{equation*}

The Thom--Whitney normalization takes cosimplicial $1$-shifted graded
Lie superalgebras to $1$-shifted graded Lie superalgebras. The reason
is simple: if $L^k$ is a $1$-shifted graded Lie superalgebra, then so
is $\Om_k \otimes L^k$, with differential $\delta$ and antibracket
\begin{equation*}
  [\alpha_1\otimes x_1,\alpha_2\otimes x_2] = (-1)^{j_2\pa(x_1)+1} \,
  \alpha_1\alpha_2 \, [x_1,x_2] ,
\end{equation*}
where $\alpha_\ell\in\Om^{i_\ell}_k$ and $x_\ell\in L^{k,j_\ell}$. The
Thom--Whitney totalization $\|L\|$ is a subspace of the product of
$1$-shifted graded superalgebras $\Om_k\otimes L^k$, and this subspace is
preserved by the differential and by the antibracket.

The construction of $\|\CF(N_\bull\CU)\|$ behaves well under refinement
of covers. A refinement $\CV=\{V_\beta\}_{\beta\in J}$ of a cover
$\CU=\{U_\alpha\}_{\alpha\in I}$ is determined by a function of
indexing sets $\phi:J\to I$, such that for all $\beta\in J$, $V_\beta$
is a subset of $U_{\phi(\beta)}$. There is a morphism of cosimplicial
$1$-shifted graded Lie superalgebras
$\Phi^*:\CF(N_\bull\CU)\to\CF(N_\bull\CV)$, obtained by sections of
$\CF$ on $U_{\phi(\alpha_0)\dots\phi(\alpha_k)}$ to sections on
$V_{\alpha_0\dots\alpha_k}$. Applying the totalization functor, we
obtain a morphism of complexes
\begin{equation*}
  \Phi^* : \|\CF(N_\bull\CU)\| \to \|\CF(N_\bull\CV)\| .
\end{equation*}
If we have a further refinement $\CW=\{W_\gamma\}_{\gamma\in K}$ of
$\CV=\{V_\beta\}_{\beta\in J}$ with $\psi:K\to J$, we may define a
composition of these refinements $\phi\psi:K\to I$, and we obtain a
commuting triangle of morphisms of complexes
\begin{equation*}
  \begin{tikzcd}
    & \|\CF(N_\bull\CU)\| \arrow{dl}[']{\Phi^*}
    \arrow{dr}{\Psi^*\Phi^*} \\
    \|\CF(N_\bull\CV)\| \arrow{rr}[']{\Psi^*} & & \|\CF(N_\bull\CW)\|
  \end{tikzcd}
\end{equation*}
The arrows in this diagram are morphisms of differential graded
$1$-shifted Lie superalgebras.

The analogue of the classical master equation \eqref{MASTER} in the
global setting is the Maurer--Cartan equation for the differential
graded $1$-shifted Lie superalgebra $\|\CF(N_\bull\CU)\|$:
\begin{equation*}
  \delta \tint \S + \half ( \tint \S , \tint \S ) = 0 .
\end{equation*}
Here, $\S$ is a consistent collection of elements
$\S^j_{\alpha_0\dots\alpha_k} \in \Om^j_k \otimes
\CF^{-j}(U_{\alpha_0\dots\alpha_k})$ of total degree $0$ which
satisfies the sequence of Maurer--Cartan equations
\begin{equation*}
  \delta\tint\S^{j-1} + \frac12 \sum_{i=0}^j ( \tint \S^i , \tint
  \S^{j-i} ) = 0 .
\end{equation*}

\section{The superparticle as a covariant field theory}

We now return to the superparticle. In this section, using the
Thom--Whitney formalism of the previous section, we will show that the
superparticle is a global covariant field theory, in the terminology
of \cite{covariant}.

Let $\D\in\Gamma(\Phase_0,\AAhat^{-1})$ be the element
\begin{equation*}
  \D = x^+_\mu \p x^\mu + p^{+\mu} \p p_\mu - e \p e^+ + c^+ \p c
  + \sum_{n=0}^\infty \T( \theta^+_n , \p\theta_n) .
\end{equation*}
\begin{definition}
  A global covariant field theory is a solution of the curved
  Maurer--Cartan equation in $\|\FF(N_\bull\CU)[u]\|$, where $\CU$ is
  a cover of $\Phase_0$:
  \begin{equation*}
    \delta \tint \S_u + \half ( \tint \S_u , \tint \S_u ) = - u \tint
    \D .
  \end{equation*}
\end{definition}

If $\S_u$ is a covariant field theory with respect to a cover $\CU$ of
$\Phase_0$ and $(\CV,\phi)$ is a refinement of $\CU$, $\Phi^*\S_u$ is
again a global covariant field theory with respect to the refined
cover.

\begin{theorem}
  \label{main}
  There is a global covariant field theory
  \begin{equation*}
    \S_u = \S + \sum_{n=0}^\infty u^{n+1} \G_n
  \end{equation*}
  such that $\S$ is the solution of the classical master equation for
  the superparticle.
\end{theorem}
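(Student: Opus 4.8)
The plan is to solve the curved Maurer--Cartan equation recursively in powers of $u$. Substituting $\S_u = \S + \sum_{n\ge0}u^{n+1}\G_n$ and collecting powers of $u$, the coefficient of $u^0$ is the globalized master equation $\delta\tint\S + \half(\tint\S,\tint\S) = 0$; this holds because $\S$ is a genuine global section (so its Thom--Whitney differential $\delta\tint\S$ vanishes) satisfying the classical master equation. Writing $\s\tint f = (\tint\S,\tint f)$ and $D = \delta + \s$, the coefficient of $u^1$ is the equation \eqref{g0},
\begin{equation*}
  D\tint\G_0 = -\tint\D ,
\end{equation*}
and for $n\ge1$ the coefficient of $u^{n+1}$ gives
\begin{equation*}
  D\tint\G_n = -\half\sum_{j+k=n-1}(\tint\G_j,\tint\G_k) =: \tint R_n .
\end{equation*}
So the task reduces to producing $\G_n$ of total degree $-2n-2$ in $\|\FF(N_\bull\CU)[u]\|$ solving these equations in turn.

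The key homological input I would use is that the cohomology of $D$ on the totalization $\|\FF(N_\bull\CU)\|$ vanishes below total degree $-1$. Indeed, this totalization computes the hypercohomology of the sheaf complex $(\FF,\s)$ over $\Phase_0$, and the descent spectral sequence
\begin{equation*}
  E_2^{pq} = H^p\bigl(\Phase_0,\mathcal{H}^q(\FF,\s)\bigr) \Longrightarrow H^{p+q}\bigl(\|\FF(N_\bull\CU)\|,D\bigr)
\end{equation*}
vanishes for $q<-1$ by Corollary~\ref{superparticle:main} and for $p<0$ trivially, forcing $H^{p+q}=0$ whenever $p+q<-1$. Granting $\G_0,\dots,\G_{n-1}$ for some $n\ge1$, the obstruction $\tint R_n$ has total degree $-2n-1\le-3$. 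First I would check that $\tint R_n$ is $D$-closed: applying $D$, substituting the recursion for $D\tint\G_j$, using that $\tint\D$ is central in $\FF$ (so the boundary terms $(\tint\D,\tint\G_k)=0$), and invoking the graded Jacobi identity, every term cancels --- this is the standard consistency of a curved Maurer--Cartan recursion. Since $\tint R_n$ is then a $D$-cocycle in a degree where the cohomology vanishes, it is a coboundary, and I may choose $\G_n$ with $D\tint\G_n = \tint R_n$. Assembling these gives a well-defined formal power series in $u$.

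The step I expect to be the main obstacle is the base case, the construction of $\G_0$. Its defining equation sits in total degree $-1$, exactly the degree in which the vanishing result is silent, so solvability is not automatic: it is equivalent to the vanishing of the class $[\tint\D]$ in $H^{-1}(\|\FF(N_\bull\CU)\|,D)$. That $\tint\D$ is a $D$-cocycle is clear --- $\delta\tint\D=0$ since $\D$ is global, and $\tint\D$ is central in $\FF$ because its Hamiltonian vector field is the time translation $\p$, which acts on any functional as a total derivative and hence vanishes under $\tint$; in particular $\s\tint\D=0$. To show $\tint\D$ is moreover a coboundary I would construct $\G_0$ explicitly. On each open set of a cover $\CU$ of $\Phase_0$ on which $p_\mu\ne0$, a choice of auxiliary covector $q_\mu$ with $\eta^{\mu\nu}p_\mu q_\nu>0$ provides the contracting homotopy $q_\mu\gamma^\mu$ for $p_\mu\gamma^\mu$ used in the proof of Theorem~\ref{thm:vanish}; from it I would build a local primitive of $\tint\D$ in terms of the composite fields $\Psi_n$, generalizing the particle primitive $x^+_\mu p^{+\mu}+ec^+$. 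Since no global primitive exists --- consistently with $\G_0$ failing to be $\so(9,1)$-invariant --- these local primitives must be glued through the Thom--Whitney complex, the differential forms on the simplices supplying the higher simplicial components of $\G_0$. Carrying out this gluing and verifying that it produces an honest solution of $D\tint\G_0=-\tint\D$ is the technical heart of the proof; with $\G_0$ in hand, the inductive step above finishes the construction.
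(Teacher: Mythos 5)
Your reduction to the system \eqref{G0}--\eqref{Gn} and your inductive step are essentially the paper's: the right-hand side of \eqref{Gn} is a cocycle because $\tint\D$ is central and by the Jacobi relation, and it lies in total degree $-2n-1\le-3$, where the cohomology of $\|\FF(N_\bull\CU)\|$ vanishes, so a primitive $\G_n$ exists. (Your descent spectral sequence in fact fills in a step the paper leaves implicit when it cites Theorem~\ref{thm:vanish} for this vanishing; since the cover $U_\mu=\{p_\mu\ne0\}$, $0\le\mu\le9$, is finite, the \Cech\ degree is bounded and convergence is unproblematic.) You also correctly identify the crux: the equation $(\delta+\s)\tint\G_0=-\tint\D$ lives in total degree $-1$, exactly where the vanishing theorem is silent, so the whole theorem rests on showing that the class of $\tint\D$ in $H^{-1}(\|\FF(N_\bull\CU)\|,\delta+\s)$ is zero.

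At precisely that point, however, your proposal stops being a proof and becomes a plan. Saying you ``would build a local primitive from the contracting homotopy $q_\mu\gamma^\mu$ and glue through the Thom--Whitney complex,'' while conceding that carrying this out ``is the technical heart of the proof,'' leaves the heart missing --- and it is genuinely nontrivial, because a primitive of $\tint\D$ on a single chart does not automatically assemble into a Thom--Whitney cochain: the higher simplicial components must satisfy an infinite tower of coupled identities. The paper does three things your sketch omits. First, it rewrites $\D$ via \eqref{DD} as $-\s(x^+_\mu p^{+\mu}+ec^+)$ plus a sum quadratic in the composite fields $\Psi_n$, so that the particle primitive accounts for the exact part. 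Second, it exhibits the closed-form cochain \eqref{G0:explicit}, whose degree-$k$ simplicial component involves $q_{\nu_0}\delta q_{\nu_1}\dots\delta q_{\nu_k}$, with $q_\mu=t_\mu/(2\eta^{\mu\nu}p_\nu)$, contracted against the rank-$(k+1)$ pairings $\T^{\nu_0\dots\nu_k}(\Psi_{-n},\Psi_{n-k-2})$; it is these higher gamma-matrix pairings, absent from your sketch, that solve the tower of identities all at once. Third, it verifies \eqref{GGG0} by a Clifford-algebra computation resting on Lemma~\ref{commute}, on the sign identity $(-1)^{\binom{n+2}{2}}=-(-1)^{\binom{n}{2}}$ which cancels the $e^+$ terms, and on the telescoping produced by $2\eta^{\mu\nu}p_\nu\,q_\mu=t_\mu$ (no sum over $\mu$) together with $\sum_\mu t_\mu=1$ and $\sum_\mu dt_\mu=0$. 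Until you produce such a $\G_0$ and verify it, the base case --- and with it the theorem --- remains unproved.
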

\begin{proof}
  Consider the open affine cover $\CU = \{U_\mu\}_{0\le\mu\le9}$ of
  $\Phase_0$, where
  \begin{equation*}
    U_\mu = \{p_\mu \ne 0 \} .
  \end{equation*}
  We must construct a series of cochains
  \begin{equation*}
    \tint \G_n \in \| \FF(N_\bull\CU) \|^{-2n-2} ,
  \end{equation*}
  in the Thom--Whitney totalization $\|\FF(N_\bull\CU)\|$ of the
  cosimplicial graded Lie superalgebra $\FF(N_\bull\CU)$, satisfying
  the curved Maurer--Cartan equation
  \begin{equation*}
    \delta \tint \S_u + \half ( \tint \S_u , \tint \S_u ) = - u \tint \D .
  \end{equation*}
  Equivalently, we must find a solution $\G_0$ of the equation
  \begin{equation}
    \label{G0}
    (\delta+\s)\tint\G_0 = - \tint\D ,
  \end{equation}
  and for $n>0$, solutions of the equations
  \begin{equation}
    \label{Gn}
    (\delta+\s) \tint \G_n =
    - \frac{1}{2} \sum_{j+k=n-1} ( \tint \G_j , \tint \G_k ) .
  \end{equation}
  Assuming that we have solved these equations for
  $(\G_0,\dots,\G_{n-1})$, we see that
  \begin{multline*}
    \frac{1}{2} \sum_{j+k=n-1} (\delta+\s) ( \tint \G_j , \tint \G_k ) \\
    = - ( \tint \D , \tint \G_{n-1} ) - \sum_{i+j+k=n-2} \bigl( ( \tint
    \G_i , \tint \G_j ) , \tint \G_k \bigr) .
  \end{multline*}
  The first term vanishes since $\tint\D$ lies in the centre of $\FF$,
  while the second term vanishes by the Jacobi relation for graded Lie
  superalgebras. Thus, the right-hand side of \eqref{Gn} is a
  cocycle. Since the cohomology of the complex $\| \FF(N_\bull\CU) \|$
  vanishes below degree $-1$ by Theorem~\ref{thm:vanish}, we may solve
  the equation for $\G_n$.

  Rewrite the formula for $\D$, using the definition \eqref{Psi} of
  $\Psi_n$ and the formula for the action of $\s$:
  \begin{equation}
    \label{DD}
    \D = - \s ( x^+_\mu p^{+\mu} + e c^+ ) + \frac{1}{2}
    \sum_{n=-\infty}^\infty (-1)^{\binom{n}{2}} \,
    \T(\Psi_{-n},\Psi_{n-1}) .
  \end{equation}

  Introduce the vector
  \begin{equation*}
    q_\mu = \frac{t_\mu}{2\eta^{\mu\nu}p_\nu} ,
  \end{equation*}
  and its de~Rham differential $\delta q_\mu$. We will show that the
  expression
  \begin{multline}
    \label{G0:explicit}
    \G_0 = x^+_\mu p^{+\mu} + e c^+ \\
    + \frac{1}{2} \sum_{k\ge0} \sum_{\nu_0\dots\nu_k} (-1)^k \,
    q_{\nu_0} \delta q_{\nu_1} \dots \delta q_{\nu_k}
    \sum_{n=-\infty}^\infty (-1)^{\binom{n}{2}} \,
    \T^{\nu_0\dots\nu_k}( \Psi_{-n}, \Psi_{n-k-2} )
  \end{multline}
  in $\|\AA(N_\bull\CU)\|^{-2}$ gives a solution of the equation
  \begin{equation}
    \label{GG0}
    (\delta+\s)\G_0 = - \D ,
  \end{equation}
  yielding \eqref{G0}. By \eqref{DD}, it suffices to show that
  \begin{multline}
    \label{GGG0}
    \s \sum_{\nu_0\dots\nu_k}q_{\nu_0} \delta q_{\nu_1} \dots \delta
    q_{\nu_k} \sum_{n=-\infty}^\infty (-1)^{\binom{n}{2}}
    \, \T^{\nu_0\dots\nu_k}( \Psi_{-n},\Psi_{n-k-2} ) \\
    =
    \begin{cases}
      \displaystyle
      \sum_{\nu_0\dots\nu_{k-1}} \delta q_{\nu_0} \dots \delta
      q_{\nu_{k-1}} \sum_{n=-\infty}^\infty (-1)^{\binom{n}{2}}
      \, \T^{\nu_0\dots\nu_{k-1}}( \Psi_{-n},\Psi_{n-k-1} ) , & k>0 ,
      \\[10pt]
      \displaystyle
      - \sum_{n=-\infty}^\infty (-1)^{\binom{n}{2}} \,
      \T(\Psi_{-n},\Psi_{n-1}) , & k=0 .
    \end{cases}
  \end{multline}
  We have
  \begin{multline*}
    \s \sum_{n=-\infty}^\infty (-1)^{\binom{n}{2}} \,
    \T^{\nu_0\dots\nu_k}( \Psi_{-n},\Psi_{n-k-2} ) \\
    \begin{aligned}
      &= \sum_{n=-\infty}^\infty (-1)^{\binom{n}{2}} \,
      \T^{\nu_0\dots\nu_k}( (-1)^{n+1} p_\mu \gamma^\mu \Psi_{-n+1} -
      2 e^+ \Psi_{-n+2} , \Psi_{n-k-2} ) \\
      &+ \sum_{n=-\infty}^\infty (-1)^{\binom{n}{2}+n+1} \,
      \T^{\nu_0\dots\nu_k}( \Psi_{-n}, (-1)^{n+k+1} p_\mu \gamma^\mu
      \Psi_{n-k-1} - 2e^+ \Psi_{n-k} ) \\
      &= p_\mu \sum_{n=-\infty}^\infty (-1)^{\binom{n}{2}} \Bigl(
      \T^{\nu_0\dots\nu_k}( \gamma^\mu \Psi_{-n}, \Psi_{n-k-1} ) \\
      & \qquad\qquad\qquad + (-1)^{k} \, \T^{\nu_0\dots\nu_k}(
      \Psi_{-n}, \gamma^\mu \Psi_{n-k-1} ) \Bigr) \\
      &- 2 e^+ \sum_{n=-\infty}^\infty \Bigl( (-1)^{\binom{n+2}{2}}
      + (-1)^{\binom{n}{2}} \Bigr) \T^{\nu_0\dots\nu_k}(
      \Psi_{-n}, \Psi_{n-k} ) .
    \end{aligned}
  \end{multline*}
  The sum on the last line vanishes, since
  $(-1)^{\binom{n+2}{2}} = - (-1)^{\binom{n}{2}}$. We conclude that
  \begin{multline*}
    \s \sum_{n=-\infty}^\infty (-1)^{\binom{n}{2}} \,
    \T^{\nu_0\dots\nu_k}( \Psi_{-n},\Psi_{n-k-2} ) \\
    = 2 \sum_{j=0}^k (-1)^{k-j} \eta^{\mu\nu_j}p_\mu
    \sum_{n=-\infty}^\infty
    (-1)^{\binom{n}{2}} \T^{\nu_0\dots\widehat{\nu}{}_j\dots\nu_k}(
    \Psi_{-n}, \Psi_{n-k-1} ) ,
%
%    &=
%      \begin{cases}
%        \displaystyle - 2 \sum_{n=-\infty}^\infty
%        (-1)^{\binom{n}{2}} \, \T(\Psi_{-n},\Psi_{n-1}) , & k=0 ,
%        \\[15pt]
%        \displaystyle - 2 \sum_{j=0}^k (-1)^{k-j}
%        \phi_{\nu_0\dots\widehat{\nu}{}_j\dots\nu_k} , & k>0 .
%      \end{cases}
%
  \end{multline*}
  from which \eqref{GGG0} follows.
\end{proof}

\begin{corollary}
  The long exact sequence
  \begin{equation*}
    \begin{tikzcd}
      \cdots \arrow{r}{\p} & H^{-1}(\AAhat,\s) \arrow{r} & H^{-1}(\FF,\s)
      \arrow{r} & H^0(\tAA,\s)
      \arrow[out=-5,in=170,overlay]{lld}[']{\p} & \\
      & H^0(\AAhat,\s) \arrow{r}{\p} & H^0(\FF,\s) \arrow{r} &
      H^1(\tAA,\s) \arrow{r}{\p} & \cdots
  \end{tikzcd}
\end{equation*}
splits, in the sense that the morphisms $\p$ vanish.
\end{corollary}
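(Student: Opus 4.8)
The long exact sequence is the one associated to the short exact sequence of complexes of sheaves $0 \to \tAA \overset{\p}{\longrightarrow} \AAhat \longrightarrow \FF \to 0$, and the morphisms labeled $\p$ are the maps $H^i(\tAA,\s)\to H^i(\AAhat,\s)$ induced by the total derivative. The plan is to construct, over each open set of the cover, a contracting homotopy for $\p$ out of the transgression by $\G_0$, exactly as in the particle case; the point of departure is that the Hamiltonian vector field of $\tint\D$ is $\p$.

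First I would produce a local primitive for $\D$. On the $0$-simplex $U_\mu$ the differential $\delta$ on $\Om_\bull$ vanishes and $q_\nu$ reduces to $\delta_{\nu\mu}/(2\eta^{\mu\lambda}p_\lambda)$, so only the $k=0$ term of \eqref{G0:explicit} survives; restricting the Thom--Whitney solution of Theorem~\ref{main} to this vertex gives a local section $\G_0^{(\mu)}$ of $\AAhat^{-2}$ over $U_\mu$ with $\s\G_0^{(\mu)}=-\D$. That this equation holds is the $k=0$ case of \eqref{GGG0}, and the infinite sum converges by completeness of $\AAhat$. This $\G_0^{(\mu)}$ is the local counterpart of the globally defined potential $G$ used for the particle.

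Next I would bring in the transgression. Because the assignment $\tint f\mapsto X_{\tint f}=(\tint f,-)$ of Hamiltonian vector fields is a morphism of graded Lie superalgebras, and $(\tint S,\tint\G_0^{(\mu)})=\s\tint\G_0^{(\mu)}=-\tint\D$, the evolutionary vector field $h^{(\mu)}=X_{\tint\G_0^{(\mu)}}$, of ghost number $-1$ and odd parity, will satisfy
\begin{equation*}
  [\s,h^{(\mu)}] = X_{(\tint S,\,\tint\G_0^{(\mu)})} = -X_{\tint\D} = -\p
\end{equation*}
as operators on $\AAhat$ over $U_\mu$; since $\s$ and $h^{(\mu)}$ are both odd, this graded commutator is the anticommutator $\s h^{(\mu)}+h^{(\mu)}\s$. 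Consequently, for any $\s$-cocycle $g$ over $U_\mu$ representing a class in $H^i(\tAA,\s)$---so that $\s g$ is a constant, which the derivation $h^{(\mu)}$ kills---one finds $\p g=-\s\bigl(h^{(\mu)}g\bigr)$, which is $\s$-exact. Thus the induced map $\p\colon H^i(\tAA,\s)\to H^i(\AAhat,\s)$ vanishes over each $U_\mu$, and as the $U_\mu$ cover $\Phase_0$ it vanishes as a morphism of cohomology sheaves. Once these maps are gone, the long exact sequence breaks into the short exact sequences $0\to H^i(\AAhat,\s)\to H^i(\FF,\s)\to H^{i+1}(\tAA,\s)\to 0$.

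The hard part will be the interlocking of the two key identities. I must check that the vertex restriction of \eqref{G0:explicit} genuinely solves $\s\G_0^{(\mu)}=-\D$, and track the ghost-number and parity signs in the Lie-algebra-morphism property of the Hamiltonian vector fields; granting these, the whole statement reduces to the observation that transgression by $\G_0$ is a contracting homotopy for the total derivative on $\s$-cohomology.
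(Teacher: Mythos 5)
Your proof is correct and is essentially the argument the paper intends: the corollary is the superparticle analogue of the particle's transgression argument, with the vertex ($k=0$) components of the Thom--Whitney solution $\G_0$ supplying local primitives of $-\D$ on each $U_\mu$, whose Hamiltonian vector fields $h^{(\mu)}$ give the contracting homotopy $[\s,h^{(\mu)}]=-\p$. Since the statement concerns morphisms of cohomology sheaves, checking vanishing over the cover $\{U_\mu\}$ is exactly what is required, and your handling of the $\tAA$ classes (constants being killed by the derivation $h^{(\mu)}$) is the right way to treat the quotient by constant multiples of the identity.
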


By an extension of this method, we may show that the space of
solutions of \eqref{covariant} is a contractible simplicial set. This
amounts to showing that for each $n>0$, any solution of
\eqref{covariant} in $\Om(\p\Delta^n)\otimes\|\FF(N_\bull\CU)[[u]]\|$
may be extended to a solution of \eqref{covariant} in
\begin{equation*}
  \Om(\Delta^n)\otimes\|\FF(N_\bull\CU)[[u]]\| =
  \Om_n\otimes\|\FF(N_\bull\CU)[[u]]\| .
\end{equation*}
In particular, the case $n=1$ shows that there is a solution of
\eqref{covariant} in $\Om_1\otimes\|\FF(N_\bull\CU)[[u]]\|$
interpolating between any pair of solutions of \eqref{covariant} in
$\|\FF(N_\bull\CU)[[u]]\|$.

\renewcommand{\c}{\mathsf{c}}
\newcommand{\x}{\mathsf{x}}

The cover $\CU$ of $M_0$ may be used to give explicit formulas for
cohomology classes in the hypercohomology of the complexes of sheaves
$\FF$ and $\AAhat$. The 1-cochain in the Thom-Whitney complex of
$\AAhat$
\begin{equation*}
  \c = c - \sum_{\alpha=0}^9 q_\alpha \left( p_\mu \T( \gamma^\mu
  \gamma^\alpha \theta_0,\theta_1 ) + 2 e^+
  \T^\alpha(\theta_1,\theta_1 ) -
  2 e^+ \T^\alpha(\theta_0,\theta_2 ) \right)
\end{equation*}
is a cocycle, and the 0-cochain
\begin{equation*}
  \x^\mu = x^\mu - \half \sum_{\alpha=0}^9 q_\alpha \left( p_\nu
    \T(\gamma^\alpha\gamma^\nu\gamma^\mu\theta_0,\theta_0) -
    4e^+ \T^{\alpha\mu}(\theta_0,\theta_1) \right)
\end{equation*}
satisfies the formula $(\delta+\s)\x^\mu=-\eta^{\mu\nu}\c p_\nu$,
analogous to the formula $sx^\mu=-\eta^{\mu\nu}cp_\nu$ for the
particle. (In the definitions of $\c$ and $\x^\mu$, we understand the
Einstein summation convention for the indices $\mu$ and $\nu$, but not
$\alpha$.)  We see that if $f(x,p,\theta)$ is a function of $x^\mu$,
$p_\mu$ and $\theta\in\Ss_-$, then
\begin{equation*}
  (\delta+\s)f(\x,p,\uptheta) = - \c p_\mu\,\frac{\p f}{\p
    x^\mu}(\x,p,\uptheta) ,
\end{equation*}
and $(\delta+\s)\c f(\x,p,\uptheta)=0$, where
$\uptheta=p_\mu\gamma^\mu\theta-2e^+\theta_1$. This may be compared to
those of Section~2 of Bergshoeff et al.\ \cite{BKV}; the formulas
presented there only apply outside the hypersurface $p_0-p_9=0$.

\section{Supersymmetry and Lorentz invariance of the solution}

The reason for the interest of the superparticle, and of the
Green--Schwarz superstring for which it is a toy model, is that it is
manifestly supersymmetric. The supersymmetry is generated by the
functional $\tint Q$, where
\begin{equation*}
  Q = \theta^+_0 - \half x^+_\mu \gamma^\mu \theta_0 \in
  \Ss_-\otimes\AA^{-1} .
\end{equation*}
The formula $\s Q = \p \bigl( p_\mu\gamma^\mu\theta_0 + 2e^+\theta_1)$
implies the vanishing of the Batalin--Vilkovisky antibracket
\begin{equation}
  \label{SQ}
  ( \tint Q , \tint \S ) = 0 .
\end{equation}

Let $\AA_\star$ be the subalgebra of the sheaf $\AA$ generated by the
fields
\begin{equation*}
  \{\p^\ell p_\mu,\p^\ell x^+_\mu,\p^\ell e^+,\p^\ell c^+\}_{\ell\ge0} \cup
  \{\p^\ell\Psi_n\mid n\in\Z\}_{\ell\ge0} .
\end{equation*}
Let $\AAhat_\star\subset\AAhat$ be its associated completion, with
respect to the fields of negative degree
\begin{equation*}
  \{\p^\ell x^+_\mu,\p^\ell e^+,\p^\ell c^+\}_{\ell\ge0} \cup
  \{\p^\ell\Psi_n\mid n<0\}_{\ell\ge0} .
\end{equation*}
Both $\AA_\star$ and $\AAhat_\star$ may be viewed as sheaves over the
fibre $\Phase_\star$ of $\Phase_0$ over the point $x^\mu=0$.

\begin{lemma}
  The subsheaf $\AAhat_\star\subset\AAhat$ satisfies
  $\p[\AAhat_\star]\subset\AAhat_\star$ and is closed under the
  Soloviev bracket.
\end{lemma}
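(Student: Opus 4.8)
The plan is to establish the two assertions in turn, reducing each to the action of suitable derivations on the generators of $\AA_\star$ and then extending to the completion by continuity. The stability $\p[\AAhat_\star]\subset\AAhat_\star$ is almost built into the definition: the total derivative $\p$ is a graded derivation that sends each generator $\p^\ell\xi$ of $\AA_\star$, with $\xi$ among $p_\mu,x^+_\mu,e^+,c^+,\Psi_n$, to the generator $\p^{\ell+1}\xi$, so $\p[\AA_\star]\subset\AA_\star$; and since $\p$ preserves the ghost number, it respects the filtration by the negative-degree generators, hence is continuous and passes to $\AAhat_\star$.

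For the Soloviev bracket, the key structural observation is that $\[f,g\]$ can receive contributions only from the spinor pairs $(\theta_n,\theta^+_n)$. In the defining formula the summand indexed by a field $a$ pairs a $\xi^a$-derivative of one argument against a $\xi^+_a$-derivative of the other. I would note that $\AA_\star$ contains none of the variables $x^\mu$, $e$, $c$, $p^{+\mu}$: for $a\in\{x^\mu,e,c\}$ both field-derivatives $\p_{a,k}f$ and $\p_{a,\ell}g$ vanish, while for $a=p_\mu$ both antifield-derivatives $\p^{a}_\ell g$ and $\p^{a}_k f$ vanish. Every term except $a=\theta_n$ therefore drops out, leaving
\begin{equation*}
  \[ f , g \] = \sum_{n\ge0} (-1)^{(\pa(f)+1)\pa(\theta_n)}
  \sum_{k,\ell\ge0} \bigl( \p^\ell ( \p_{\theta_n,k} f ) \, \p^k ( \p^{\theta_n}_\ell g )
  + (-1)^{\pa(f)} \, \p^\ell ( \p^{\theta_n}_k f ) \, \p^k ( \p_{\theta_n,\ell} g ) \bigr) .
\end{equation*}

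It then suffices to show that the graded derivations $\p_{\theta_n,k}$ and $\p^{\theta_n}_\ell$ carry $\AA_\star$ into itself: each inner factor above then lands in $\AA_\star$, and the outer applications of $\p^\ell$ and $\p^k$ keep it there by the first assertion. Being derivations, it is enough to test them on the generators, and they annihilate $p_\mu,x^+_\mu,e^+,c^+$, acting nontrivially only on the $\Psi_m$. Using the commutation relation $\p_{\theta_n,k}\,\p = \p\,\p_{\theta_n,k} + \p_{\theta_n,k-1}$ (with $\p_{\theta_n,-1}=0$) and its iterate, $\p_{\theta_n,k}(\p^\ell\Psi_m)$ is a combination of $\p$-derivatives of the $\p_{\theta_n,k'}\Psi_m$; from \eqref{Psi} these are the constants and antifields $1$, $\half x^+_\mu\gamma^\mu$, $(-1)^{m+1}x^+_\mu\gamma^\mu$ and $2c^+$, all in $\AA_\star$. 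Likewise $\p^{\theta_n}_\ell\Psi_m$ is constant, since the only $\theta^+$-dependence of the composite fields is the linear occurrence of $\theta^+_n$ in $\Psi_{-n-1}$. Hence both derivations preserve $\AA_\star$, and for polynomial $f,g$ — where only finitely many $\theta_n$ occur, so that the outer sum is finite — we obtain $\[f,g\]\in\AA_\star$.

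To finish I would verify that the bracket is continuous for the negative-degree filtration, so that it extends to a map $\AAhat_\star\times\AAhat_\star\to\AAhat_\star$. The hard part is the localization step: it rests on the precise inventory of variables present in $\AA_\star$ — notably the absence of $x^\mu$, $e$, $c$ and $p^{+\mu}$ — together with the explicit $\theta$- and $\theta^+$-derivatives of the $\Psi_m$ read off from \eqref{Psi}; granting these, convergence in the completion follows routinely from completeness.
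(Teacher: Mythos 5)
Your proof is correct and follows essentially the same route as the paper's: kill every non-$\theta$ term in the Soloviev bracket using the absence of $x^\mu$, $e$, $c$, $p^{+\mu}$ from $\AA_\star$, and then check that the $\theta_n$- and $\theta^+_n$-derivatives of the generators $\p^\ell\Psi_m$ land back in $\AA_\star$ (the paper states this last observation without the explicit computation you give). Incidentally, your version states the vanishing conditions correctly --- what must vanish are the derivatives with respect to $\p^k e$, not $\p^k e^+$ as in the paper's display, which is evidently a typo since $e^+$ is itself a generator of $\AA_\star$ --- and your explicit treatment of the commutation with $\p$ and of convergence in the completion fills in details the paper leaves implicit.
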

\begin{proof}
  It follows directly from its definition that $\AAhat_\star$ is
  preserved by the action of $\p$. In order for $\AAhat_\star$ to be
  closed under the Soloviev bracket, it suffices to observe that for
  all fields $\Phi$ that generate $\AA_\star$, we have
  \begin{equation*}
    \frac{\p\Phi}{\p(\p^k x^\mu)} = \frac{\p\Phi}{\p(\p^k p^{+\mu})} =
    \frac{\p\Phi}{\p(\p^k e^+)} = \frac{\p\Phi}{\p(\p^k c)} = 0 .
  \end{equation*}
  This implies that
  \begin{multline*}
    \[ f , g \] = \sum_{n=0}^\infty (-1)^{(n+1)(\pa(f)+1)}
    \sum_{k,\ell=0}^\infty \\
    \left( \p^\ell \left( \frac{\p
          f}{\p(\p^k\theta_n)} \right) \, \p^k \left( \frac{\p
          g}{\p(\p^\ell\theta^+_n)} \right) + (-1)^{\pa(f)} \p^\ell
      \left( \frac{\p f}{\p(\p^k\theta^+_n)} \right) \, \p^k \left(
        \frac{\p g}{\p(\p^\ell\theta_n)} \right) \right) .
  \end{multline*}
  It only remains to observe that for all $m\in\Z$ and $n\ge0$, and
  all $k,\ell\ge0$, the partial derivatives
  $\p(\p^\ell\Psi_m)/\p(\p^k\theta_n)$ and
  $\p(\p^\ell\Psi_m)/\p(\p^k\theta^+_n)$ are in $\AAhat_\star$.
\end{proof}

We now have the following analogue of Theorem
\ref{superparticle:main}. The proof follows the same lines, but is
actually somewhat simpler.

\begin{lemma}
  \label{superparticle:subalgebra}
  Let $\FF_\star=\AAhat_\star/\p\AAhat_\star$. The cohomology sheaf
  $H^i(\FF_\star,\s)$ vanishes unless $i\in\{-1,0\}$.
\end{lemma}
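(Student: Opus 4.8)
The plan is to imitate the proof of Theorem~\ref{thm:vanish} (and hence of Corollary~\ref{superparticle:main}), working with the subalgebra $\AAhat_\star$ in place of $\AAhat$, and to exploit the fact that $\AA_\star$ is built only from the fields $\{\p^\ell p_\mu,\p^\ell x^+_\mu,\p^\ell e^+,\p^\ell c^+\}_{\ell\ge0}$ together with the composite fields $\{\p^\ell\Psi_n\mid n\in\Z\}_{\ell\ge0}$. The previous lemma guarantees that $\s$ restricts to $\AAhat_\star$: indeed $\s$ is the Hamiltonian vector field of $\tint\S$, and since $\AAhat_\star$ is closed under the Soloviev bracket, the operator $\[\tint\S,-\]=\s$ preserves $\AAhat_\star$. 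First I would restrict the auxiliary grading of Theorem~\ref{thm:vanish} to $\AA_\star$, recording that on the generators of $\AA_\star$ the degrees are $\deg(\p^\ell\Psi_n)=3n$ and $\deg(\p^\ell\Psi_{-n})=-2n$, and verify that the coerciveness estimate \eqref{coerce:superparticle} persists (the $\theta_n$ for $n\ge 0$ do not generate $\AA_\star$, so the bounds only simplify). This yields an exhaustive, Hausdorff, complete filtration $G^k\AAhat_\star$, and I would run the associated spectral sequence.

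The key simplification is that many of the generators that complicated the proof of Theorem~\ref{thm:vanish} are now absent. On the zeroth page the differential $\s_0$ is again the Koszul differential pairing $\p x^\mu$ with $p^{+\mu}$, $\p p_\mu$ with $x^+_\mu$, and $\p e^+$ with $c^+$; but since $\AA_\star$ contains neither $x^\mu$ nor $e$ nor the ghosts $c$ and $\theta_n$ ($n\ge0$), the computation of $E_1$ is cleaner. The decisive step is the page carrying the differential
\begin{equation*}
  \s_2 = \sum_{n} (-1)^{n+1} \pr\Bigl( p_\mu \T^\mu\bigl(\Psi_{1-n},\tfrac{\p~}{\p\Psi_{-n}}\bigr) \Bigr) ,
\end{equation*}
whose cohomology is controlled, after localizing away from the zero section of $\Phase$, by the vanishing of the cohomology of $p_\mu\gamma^\mu:\Ss_\pm\to\Ss_\mp$ on the light-cone $\Cone_0$. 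Exactly as in Theorem~\ref{thm:vanish}, choosing $q_\mu$ with $\eta^{\mu\nu}p_\mu q_\nu>0$ gives the contracting homotopy $q_\mu\gamma^\mu$, so the $\Psi$-modes contribute no cohomology in negative degree.

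I expect the main point to be bookkeeping: tracking which surviving generators land in which bidegrees and confirming degeneration of the spectral sequence, so that only the degrees $i\in\{-1,0\}$ remain. Because $\AA_\star$ omits the fields $x^\mu,e,c,\theta_n$, the differentials $\s_3$ and beyond — which in Theorem~\ref{thm:vanish} acted through $-\eta^{\mu\nu}cp_\mu\,\p/\p x^\nu$ and $\p c\,\p/\p e$ — are no longer present, which is precisely why the argument is ``somewhat simpler'' and why the cohomology is confined to two adjacent degrees rather than being bounded below by $-1$ alone. Finally I would pass from $\AAhat_\star$ to $\FF_\star=\AAhat_\star/\p\AAhat_\star$ via the resolution
\begin{equation*}
  0 \longrightarrow \tAA_\star \overset{\p}{\longrightarrow} \AAhat_\star \longrightarrow \FF_\star \longrightarrow 0 ,
\end{equation*}
where $\tAA_\star$ is the quotient of $\AAhat_\star$ by constant multiples of the identity, and read off the stated vanishing from the long exact sequence in cohomology, exactly as in the proof of Corollary~\ref{superparticle:main}. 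The one genuine obstacle is to be sure that no spurious cohomology appears in degree $+1$ or higher; this is handled by the same degeneration argument, since the top degree is saturated by finitely many $\Psi$-modes and $\s_r$ vanishes for large $r$.
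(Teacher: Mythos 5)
Your overall strategy is the same as the paper's: restrict the auxiliary grading and filtration to $\AAhat_\star$, run the spectral sequence (Koszul $\s_0$, then $\s_1$ cutting down to the light-cone, then $\s_2$ handled by the contracting homotopy $q_\mu\gamma^\mu$), and pass to $\FF_\star$ via the resolution $0\to\tAA_\star\to\AAhat_\star\to\FF_\star\to0$ and its long exact sequence. But there is a genuine gap at the step where you assert that ``the differentials $\s_3$ and beyond are no longer present'' because $x^\mu$, $e$, $c$, $\theta_n$ are absent. Only the particle-type part of $\s_3$ disappears. Since
\begin{equation*}
  \s\Psi_n = (-1)^{n+1} p_\mu\gamma^\mu\Psi_{n+1} - 2e^+\Psi_{n+2},
  \qquad \deg(\Psi_{n+1})-\deg(\Psi_n)=3 \quad (n\ge0),
\end{equation*}
the third page carries the nonzero differential $\s_3\,\p^\ell\Psi_n=(-1)^{n+1}p_\mu\gamma^\mu\,\p^\ell\Psi_{n+1}$ for $n\ge0$ (the $e^+$ term contributes nothing, as $e^+$ dies at $E_2$). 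This differential is not a technicality: it is the only mechanism that can produce the \emph{upper} vanishing bound, which is the new content of this lemma compared with Theorem~\ref{thm:vanish}. Indeed, $E_3$ is generated over $\CO_{\Cone_\star}$ by $\{\p^\ell\Psi_n\mid n\ge0\}$ modulo the relations $p_\mu\gamma^\mu\p^\ell\Psi_0=0$, and these generators sit in total degree $n\ge0$; if every differential from $E_3$ onward vanished, then $E_\infty=E_3$ would be nonzero in every positive ghost number (the components of $\Psi_1$ give classes in degree $1$, those of $\Psi_2$ in degree $2$, products give more), and no appeal to degeneration could then exclude cohomology in degrees $i\ge1$. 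Your closing remark that this is ``handled by the same degeneration argument'' is therefore not correct: degeneration alone computes nothing; one must compute $H(E_3,\s_3)$.

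The correct continuation uses $\ker(p_\mu\gamma^\mu)=\im(p_\mu\gamma^\mu)$ on $\Cone_\star$ a second time. Splitting $\Ss_\pm$ into $\ker(p_\mu\gamma^\mu)$ and a complement, the relation eliminates half of the components of each $\p^\ell\Psi_0$, and $\s_3$ pairs the surviving half of $\p^\ell\Psi_n$ isomorphically with half of the components of $\p^\ell\Psi_{n+1}$, for every $n\ge0$ and $\ell\ge0$; since the parities of $\Psi_n$ and $\Psi_{n+1}$ are opposite, these are genuine Koszul pairs, and every generator is consumed. Hence $E_4=E_\infty=\CO_{\Cone_\star}$, concentrated in bidegree $(0,0)$, so $H^i(\AAhat_\star,\s)$ vanishes for $i\ne0$ (and likewise for $\tAA_\star$); the long exact sequence of the resolution then confines $H^i(\FF_\star,\s)$ to $i\in\{-1,0\}$, as claimed. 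One smaller repair: $\s$ does preserve $\AAhat_\star$, but not for the reason you give. The action $\S$ is not a section of $\AAhat_\star$ (it involves $x^\mu$, $p^{+\mu}$, $e$, $c$), so closure of $\AAhat_\star$ under the Soloviev bracket says nothing about $\[\tint\S,-\]$. Instead, check the generators directly: $\s p_\mu=0$, $\s x^+_\mu=-\p p_\mu$, $\s e^+=-\half\eta^{\mu\nu}p_\mu p_\nu$, $\s c^+=\p e^+-\eta^{\mu\nu}x^+_\mu p_\nu$, and $\s\Psi_n$ as in condition (iii) of the Proposition, all of which lie in $\AAhat_\star$; since $\s$ is evolutionary, this suffices.
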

\begin{proof}
  The sheaf $\AAhat_\star$ is an algebra over the momentum space
  $\Phase_\star$, whose structure sheaf is the algebra of rational
  functions in the variables $\{p_\mu\}$. The filtration of $\AAhat$
  induces a filtration of $\AAhat_\star$, and the differential $\s_0$
  on the zeroth page of the associated spectral sequence $E^{pq}_0$
  equals
  \begin{equation*}
    \s_0 = - \pr \left( \p p_\mu \, \frac{\p~}{\p x^+_\mu}
      + \p e^+ \, \frac{\p~}{\p c^+} \right) .
  \end{equation*}
  This is a Koszul differential and its cohomology $E_1$ is the graded
  commutative algebra freely generated over the structure sheaf
  $\CO_{\Phase_\star}$ by the variables
  \begin{equation*}
    \{ e^+ \} \cup \{ \p^\ell\Psi_n \mid n\in\Z \}_{\ell\ge0} .
  \end{equation*}
  
  The differential $\s_1$ on the first page $E_1$ of the spectral
  sequence is given by the formula
  \begin{equation*}
    \s_1 = \pr \left( - \tfrac{1}{2} \eta^{\mu\nu} p_\mu p_\nu \,
      \frac{\p~}{\p e^+} \right) .
  \end{equation*}
  The element $\eta^{\mu\nu}p_\mu p_\nu$ is not a zero divisor in
  $E_1$: its zero-locus is the light-cone
  \begin{equation*}
    \Cone_\star = \{ p_\mu \ne 0 \mid \eta^{\mu\nu} p_\mu p_\nu = 0 \} .
  \end{equation*}

  We see that the second page $E_2$ of the spectral sequence is a
  sheaf of graded commutative algebras generated over
  $\CO_{\Cone_\star}$ by the variables
  \begin{equation*}
    \{ \p^\ell\Psi_n \mid n\in\Z \}_{\ell\ge0} .
  \end{equation*}
  The differential $\s_2$ on the second page $E_2$ of the spectral
  sequence is given by the formula
  \begin{equation*}
    \s_2 = \sum_{n=1}^\infty (-1)^{n+1} \pr \left( p_\mu T^\mu\left(
        \Psi_{1-n} , \frac{\p~}{\p\Psi_{-n}} \right) \right) .
  \end{equation*}
  On the light-cone, the operator
  \begin{equation*}
    p_\mu \gamma^\mu : \Ss_\pm \to \Ss_\mp
  \end{equation*}
  has vanishing cohomology. We see that the third page $E^{p+q}_3$ is
  generated over $\CO_{\Cone_\star}$ by the variables
  \begin{equation*}
    \{\p^\ell\Psi_n\mid n\ge0\}_{\ell\ge0}\in E^{3n,-2n}_3 ,
  \end{equation*}
  modulo relations
  $\{p_\mu\gamma^\mu\p^\ell\Psi_0\}_{\ell\ge0}\in E^{00}_3$ and hence
  the differential $\s_r$ of the $r$th page of the spectral sequence
  vanishes for $r>3$.

  The remainder of the proof follows the proof of
  Theorem~\ref{superparticle:main}.
\end{proof}

\begin{theorem}
  \label{supersymmetry}
  There is a choice of the solution $\S_u$ to the equation
  \eqref{covariant} such that
  \begin{equation*}
    ( \tint Q , \tint \S_u ) = 0 .
  \end{equation*}
\end{theorem}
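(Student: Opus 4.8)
The plan is to build the coefficients $\G_n$ in the expansion $\S_u=\S+\sum_{n\ge0}u^{n+1}\G_n$ so that, beyond solving the curved Maurer--Cartan equation \eqref{covariant}, each is annihilated by the supersymmetry operator $\ad_{\tint Q}=(\tint Q,-)$. Since $(\tint Q,\tint\S)=0$ by \eqref{SQ}, this immediately yields $(\tint Q,\tint\S_u)=\sum_{n\ge0}u^{n+1}(\tint Q,\tint\G_n)=0$. The first preparatory step is to check that $\ad_{\tint Q}$ is a degree-$0$ chain operator on the Thom--Whitney complex $\|\FF(N_\bull\CU)\|$. Because $Q=\theta^+_0-\half x^+_\mu\gamma^\mu\theta_0$ is globally defined and independent of the simplicial coordinates, $\delta\tint Q=0$ and $\ad_{\tint Q}$ commutes with $\delta$; and the Jacobi relation together with $(\tint Q,\tint\S)=0$ gives $\ad_{\tint Q}\s=\s\,\ad_{\tint Q}$, the sign $(-1)^{(\pa(Q)+1)(\pa(\S)+1)}$ being trivial. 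Hence $\ad_{\tint Q}$ commutes with $\delta+\s$.

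Next I would observe that the right-hand sides of \eqref{G0} and \eqref{Gn} are $Q$-invariant. For \eqref{G0} this is immediate, since $\tint\D$ is central in $\FF$ (as used already in the proof of Theorem~\ref{main}), so $(\tint Q,\tint\D)=0$. For \eqref{Gn}, assuming inductively that $(\tint Q,\tint\G_j)=0$ for $j<n$, the Jacobi relation expands each $(\tint Q,(\tint\G_j,\tint\G_k))$ into terms carrying a factor $(\tint Q,\tint\G_j)$ or $(\tint Q,\tint\G_k)$, all of which vanish; thus $\ad_{\tint Q}$ kills the entire right-hand side of \eqref{Gn}. Consequently, at every stage of the induction the cocycle to be trivialized lies in the subcomplex $\ker\ad_{\tint Q}\subset\|\FF(N_\bull\CU)\|$ of supersymmetry-invariant cochains.

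It then remains to solve $(\delta+\s)\tint\G_n=\tint R_n$ with $\tint\G_n\in\ker\ad_{\tint Q}$, where $\tint R_n$ is the relevant right-hand side. I would dispose of the base case $n=0$ by direct verification from the explicit formula \eqref{G0:explicit}: using $\s\Psi_n=(-1)^{n+1}p_\mu\gamma^\mu\Psi_{n+1}-2e^+\Psi_{n+2}$ together with the transformation rules for $\theta_n$ and $x^\mu$, one checks $(\tint Q,\tint\G_0)=0$, so $\G_0$ is already supersymmetric. For the inductive step the cleanest route is to produce a contracting homotopy $h$ for $\delta+\s$ in total degrees below $-1$ that commutes with $\ad_{\tint Q}$: then $\tint\G_n:=h\tint R_n$ satisfies $(\delta+\s)\tint\G_n=\tint R_n$ and automatically $\ad_{\tint Q}\tint\G_n=h\,\ad_{\tint Q}\tint R_n=0$. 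Equivalently, one shows directly that the invariant subcomplex $\ker\ad_{\tint Q}$ has vanishing cohomology below degree $-1$, by rerunning the filtration and spectral-sequence argument of Theorem~\ref{thm:vanish} on the $Q$-invariants; here the reduction to the fibre $\Phase_\star$ and the simpler cohomology of $\FF_\star$ from Lemma~\ref{superparticle:subalgebra}, concentrated in degrees $\{-1,0\}$, supplies the needed input once target translations are used to normalize $x^\mu=0$.

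The main obstacle is exactly this last step. The Koszul homotopies of Theorem~\ref{thm:vanish} and, above all, the contracting homotopy $q_\mu\gamma^\mu$ for the operator $p_\mu\gamma^\mu$ on the tower $\{\Psi_n\}$ must be checked to commute with the supersymmetry action $\ad_{\tint Q}$ on the associated graded. Since $\ad_{\tint Q}$ acts by shifting $\theta_0$ and by sending $x^\mu$ to spinor bilinears in $\theta_0$, while $q_\mu=t_\mu/(2\eta^{\mu\nu}p_\nu)$ depends only on $p_\mu$ and the simplicial coordinates, I expect these operations to commute page by page; verifying this compatibility, and hence the $Q$-equivariant vanishing below degree $-1$, is the crux. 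Granting it, the induction produces a supersymmetric solution, and $(\tint Q,\tint\S_u)=0$ follows.
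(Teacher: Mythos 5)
Your formal reduction is fine as far as it goes: $\ad_{\tint Q}=(\tint Q,-)$ commutes with $\delta+\s$, the right-hand sides of \eqref{G0} and \eqref{Gn} are annihilated by $\ad_{\tint Q}$ (by centrality of $\tint\D$, Jacobi, and induction), and $\q\G_0=0$ holds by inspection of \eqref{G0:explicit}. But the step you yourself flag as the crux --- vanishing of the cohomology of the invariant subcomplex $\ker\ad_{\tint Q}$ below degree $-1$, or equivalently an equivariant contracting homotopy --- is a genuine gap, and the route you sketch does not close it. Taking $\q$-invariants does not commute with taking cohomology (the action is odd and nowhere near semisimple), and the spectral sequence of Theorem~\ref{thm:vanish} cannot be ``rerun on invariants'': $\q$ does not even preserve the auxiliary grading, since $\q\theta_0$ is a constant (degree $1\to0$) while $\q x^\mu$ is proportional to $\gamma^\mu\theta_0$ (degree $0\to1$), so $\q$ fails to preserve the filtration $G^k$ and hence does not act on the pages at all; the compatibility of $q_\mu\gamma^\mu$ with $\q$ on $E_2$ never gets a chance to matter. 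Your appeal to Lemma~\ref{superparticle:subalgebra} also misidentifies the relevant complex: $\FF_\star$ is built from the subalgebra $\AAhat_\star$ generated by the manifestly invariant fields $\{p_\mu,x^+_\mu,e^+,c^+,\Psi_n\}$, which is strictly smaller than $\ker\q$, so its vanishing theorem says nothing about the full invariant subcomplex, and ``normalizing $x^\mu=0$ by translations'' is not what relates the two.

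The paper's proof avoids equivariant cohomology entirely by running the induction with the \emph{stronger} hypothesis $\G_n\in\|\FF_\star(N_\bull\CU)\|^{-2n-2}$ for $n>0$ (membership in $\|\FF_\star\|$, not mere $Q$-invariance). For this one must show that the cocycle $-\half\sum_{j+k=n-1}(\tint\G_j,\tint\G_k)$ actually lies in $\|\FF_\star(N_\bull\CU)\|$, which requires two bracket facts your proposal never formulates: first, $\AAhat_\star$ is closed under $\p$ and the Soloviev bracket (the lemma preceding Lemma~\ref{superparticle:subalgebra}), which handles the terms with $j,k>0$; second, since $\G_0\notin\FF_\star$ (its leading term $x^+_\mu p^{+\mu}+ec^+$ involves $p^{+\mu}$ and $e$, which are not generators of $\AA_\star$), one needs the computation that on $\AAhat_\star$ the operator $\ad(x^+_\mu p^{+\mu}+ec^+)$ restricts to the evolutionary vector field $\pr\bigl(-x^+_\mu\,\p/\p p_\mu+c^+\,\p/\p e^+\bigr)$, which preserves $\AAhat_\star$; this handles the terms $(\tint\G_0,\tint\G_{n-1})$. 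Once membership in $\|\FF_\star\|$ is secured, Lemma~\ref{superparticle:subalgebra} gives solvability of \eqref{Gn} within $\|\FF_\star(N_\bull\CU)\|$, and supersymmetry is automatic because $\q$ annihilates $\AAhat_\star$ (as $\q\Psi_n=0$). The fix for your argument is therefore to replace ``$Q$-invariant cochains'' by ``cochains in $\|\FF_\star(N_\bull\CU)\|$'' throughout, and to supply the two bracket lemmas that make that induction close.
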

\begin{proof}
  Let $\q$ be the Hamiltonian vector field associated to $\tint Q$. It
  is easily seen that $\q\Psi_n=0$, and hence that $\q$ annihilates
  $\AAhat_\star$. It is easily seen that $\q\G_0=0$. We prove the
  theorem by showing that for all $n>0$, $\G_n$ may be chosen in
  $\|\FF_\star(N_\bull\CU)\|^{-2n-2}$. In view of
  Lemma~\ref{superparticle:subalgebra}, it suffices to show that the
  cocycle
  \begin{equation*}
    - \frac{1}{2} \sum_{j+k=n-1} ( \tint \G_j , \tint \G_k )
  \end{equation*}
  lies in $\|\FF_\star(N_\bull\CU)\|^{-2n-1}$ for $n>0$. By induction, we
  may assume that this holds for all of the terms of this sum with
  $j,k>0$. It remains to check that
  \begin{equation*}
    ( \tint \G_0 , \tint \G_{n-1} )  \in \|\FF_\star(N_\bull\CU)\|^{-2n-1} .
  \end{equation*}
  But modulo $\AAhat_\star$, $\G_0=x^+_\mu p^{+\mu}+ec^+$, and it is
  easily seen that
  \begin{equation*}
    \[ x^+_\mu p^{+\mu}+ec^+ , \AAhat_\star \] \subset \AAhat_\star .
  \end{equation*}
  Indeed, on restriction to $\AAhat_\star$, the Soloviev bracket
  $\ad(x^+_\mu p^{+\mu}+ec^+)$ is given by the evolutionary vector field
  \begin{equation*}
    \pr \left( - x^+_\mu \, \frac{\p~}{\p p_\mu} + c^+ \,
      \frac{\p~}{\p e^+} \right) ,
  \end{equation*}
  which preserves $\AAhat_\star$.
\end{proof}

We now turn to the question of Lorentz invariance of the solution
$\S_u$ to \eqref{covariant} that we have obtained. Let $\so(9,1)$ be
the Lie algebra of the Lorentz group $\SO(9,1)$, with basis
$\rho^{\mu\nu}=-\rho^{\nu\mu}$, $0\le\mu<\nu\le9$, satisfying the
commutation relations
\begin{equation*}
  [ \rho^{\kappa\lambda} , \rho^{\mu\nu} ] = \eta^{\lambda\mu}
  \rho^{\kappa\nu} + \eta^{\kappa\nu} \rho^{\lambda\mu} -
  \eta^{\lambda\nu} \rho^{\kappa\mu} - \eta^{\kappa\mu}
  \rho^{\lambda\nu} .
\end{equation*}
This action is realized on $\FF$ by the currents of \eqref{Lorentz_alg}.

Consider the complex
\begin{equation*}
  C^*(\so(9,1)) \otimes \|\FF(N_\bull\CU)\| .
\end{equation*}
The antibracket on $\|\FF(N_\bull\CU)\|$ induces a graded Lie bracket
on $C^*(\so(9,1)) \otimes \|\FF(N_\bull\CU)\|$, making it into a
differential graded Lie algebra. Denote the dual basis of
$\so(9,1)^\vee$ by $\eps_{\mu\nu}$, and consider the element
\begin{equation*}
  \S(\eps) = \S + M^{\mu\nu} \eps_{\mu\nu} \in C^*(\so(9,1)) \otimes
  \FF(\Phase_0) \subset C^*(\so(9,1)) \otimes \|\FF(N_\bull\CU)\| .
\end{equation*}
The invariance of $\S$ under the action of the Lorentz group implies
that this is a Maurer--Cartan element of
$C^*(\so(9,1)) \otimes \|\FF(N_\bull\CU)\|$; see \eqref{Lorentz}.

We will construct a sequence of elements
\begin{equation*}
  \G_n(\eps) \in C^*(\so(9,1)) \otimes \|\FF(N_\bull\CU)\| ,
\end{equation*}
of total degree $-2n-2$, supersymmetric $\q\G_n(\eps)=0$, such that
the series
\begin{equation*}
  \S_u(\eps) = \S(\eps) + \sum_{n=0}^\infty u^{n+1} \G_n(\eps) \in
  C^*(\so(9,1)) \otimes \|\FF(N_\bull\CU)\|[[u]]
\end{equation*}
satisfies the equation \eqref{covariant:Lorentz}.

We solve \eqref{covariant:Lorentz} inductively, by an extension of the
method used to prove Theorem~\ref{supersymmetry}. Write
\begin{equation*}
  \G_n(\eps) = \sum_{k=0}^{10} \G_{n,k} ,
\end{equation*}
where $\G_{0,0}$ equals the explicit solution
$\G_0\in \|\FF(N_\bull\CU)\|^{-2}$ of \eqref{G0:explicit}, and
$\G_{n,k}\in C^k(\so(9,1)) \otimes
\|\FF_\star(N_\bull\CU)\|^{-2n-k-2}$ for $n>0$ or $k>0$. Assuming we
have found $\G_{m,\ell}$ for $m<n$ or $m=n$ and $\ell<k$, we must
solve the equation
\begin{multline}
  \label{Gnk}
  ( \delta + \s ) \G_{n,k} = - d\G_{n,k-1} - ( M^{\mu\nu}
  \eps_{\mu\nu} , \G_{n,k-1} ) \\
  - \frac{1}{2} \sum_{m=0}^{n-1} \sum_{\ell=0}^k ( \G_{m,\ell} ,
  \G_{n-m-1,k-\ell} ) \in C^k( \so(9,1) ) \otimes \|
  \FF_\star(N_\bull\CU) \|^{-2n-k-1} .
\end{multline}
By the Lorentz invariance of $\S$, $\s M^{\mu\nu}=0$. For $n>0$ or
$k>0$, this is sufficient to imply that the right-hand side of
\eqref{Gnk} is a cocycle. In the case $n=0$ and $k=1$, we need in
addition the formula
\begin{equation*}
  ( M^{\mu\nu} , \D ) = 0 .
\end{equation*}
By Lemma~\ref{superparticle:subalgebra}, there is a solution
\begin{equation*}
  \G_{n,k}\in C^k(\so(9,1)) \otimes \|\FF_\star(N_\bull\CU)\|^{-2n-k-2} .
\end{equation*}
Thus there exists a supersymmetric solution to the equation
\eqref{covariant:Lorentz}.

\appendix

\section{Spinors in signature \texorpdfstring{$(9,1)$}{(9,1)}}

Let $\Spin(9,1)$ be the double cover of the proper Lorentz group
$\SO_+(9,1)$. Let $\Ss_+$ and $\Ss_-$ be the left and right-handed
Majorana--Weyl spinor representations: these are $16$-dimensional real
representations of $\Spin(9,1)$. Let us review their construction.

Let $W$ be a finite-dimensional oriented real vector space with a
non-degenerate inner product $(v,w)$ of dimension $n$. Let
$\mathbb{T}:\Wedge^nW\to\R$ be the linear map which takes the wedge
product of an oriented unimodular frame of $W$ to $1$. This induces an
inner product on the exterior algebra $\Wedge^*W$, defined on
$\alpha\in\Wedge^kW$ and $\beta\in\Wedge^\ell W$ by the formula
\begin{equation*}
  \<\alpha,\beta\> = (-1)^{\binom{k}{2}} \mathbb{T} \bigl( \alpha
  \wedge \beta \bigr) .
\end{equation*}
Thus $\<\alpha,\beta\>$ vanishes unless $k+\ell=n$, and the form
satisfies the symmetry
\begin{equation*}
  \<\beta,\alpha\> = (-1)^{\binom{n}{2}} \<\alpha,\beta\> .
\end{equation*}

The endomorphism algebra of the exterior algebra $\Wedge^*W$ is the
Clifford algebra generated by the operators
$c^\pm(v)=\iota(v)\pm\eps(v)$, where
$\eps(v):\Wedge^*W\to\Wedge^{*+1}W$ is exterior multiplication by
$v\in W$ and $\iota(v):\Wedge^{*-1}W\to\Wedge^*W$ is contraction with
$v$. It is easily seen that $\eps(v)$ and $\iota(v)$, and hence
$c^\pm(v)$, are self-adjoint for the bilinear form $\<\alpha,\beta\>$.

The endomorphism algebra of the real vector space underlying the
quaternions $\H$ is the tensor product of the two commuting
subalgebras of left and right multiplication in $\H$. (This is one way
of seeing the isomorphism $\Spin(4)\cong\SU(2)\times\SU(2)$.) Denote
left, respectively right, multiplication by an element $a\in\H$ by
$a_L$, respectively $a_R$.

We now give an explicit representation of the Clifford algebra in
signature $(9,1)$ acting on the space of spinors
$\Ss=\H\otimes\Wedge^*\R^3$:
\begin{align*}
  \gamma^1
  &= c^+_1
  &
  \gamma^2
  &= i_L c^-_1
  &
  \gamma^3
  &= j_L c^-_1
  &
  \gamma^4
  &= k_L c^-_1 \\
  \gamma^5
  &= c^+_2
  &
  \gamma^6
  &= i_R c^-_2
  &
  \gamma^7
  &= j_R c^-_2
  &
  \gamma^8
  &= k_R c^-_2 \\
  \gamma^9
  &= c^+_3
  &
  \gamma^0
  &= c^-_3
\end{align*}
The $\gamma$-matrices $\gamma^\mu$ exchange the subspaces
$\Ss_\pm\cong\R^{16}=\H\otimes\Wedge^\pm\R^3$ of
$\H\otimes\Wedge^*\R^3$ of even, respectively odd, exterior degree in
the exterior algebra.

The Lie algebra of the group $\Spin(9,1)$ is spanned by the quadratic
expressions in the $\gamma$-matrices
\begin{equation*}
  \gamma^{\mu\nu} = \half \bigl( \gamma^\mu\gamma^\nu -
  \gamma^\nu\gamma^\mu \bigr) .
\end{equation*}
In particular, $\Spin(9,1)$ preserves the subspaces $\Ss_\pm$ of
$\Ss$. These are the left and right-handed $16$-dimensional
Majorana--Weyl spinors representations of $\Spin(9,1)$.

There is a non-degenerate symmetric bilinear form on $\Ss$, given by
the formula
\begin{equation*}
  \T( \alpha , \beta ) = \mathbb{T} \Re\bigl( c^-_1c^-_2\alpha
  \wedge \overline{\beta} \bigr) : \Ss \otimes \Ss \to \R .
\end{equation*}
From the explicit formulas for $\gamma^\mu$, we see that
\begin{equation*}
  \T( \gamma^\mu\alpha , \beta ) = \T( \alpha , \gamma^\mu
  \beta ) .
\end{equation*}
The Clifford algebra of $\R^{9,1}$ has basis
\begin{equation*}
  \gamma^{\mu_1\dots\mu_k} = \frac{1}{k!} \sum_{\pi\in S_k} (-1)^\pi
  \gamma^{\mu_{\pi(1)}} \dots \gamma^{\mu_{\pi(k)}} ,
\end{equation*}
where $\mu_1\dots\mu_k$ ranges over the set
$\{1\le \mu_1<\cdots<\mu_k\le 10\}$. Let $\CV$ be the vector
representation $\R^{9,1}$ of $\Spin(9,1)$, and define pairings
$\T^{\mu_1\dots\mu_k} : \Ss \otimes \Ss \to \Wedge^k \CV$ by
\begin{equation*}
  \T^{\mu_1\dots\mu_k}(\alpha,\beta) =
  \T(\gamma^{\mu_1\dots\mu_k}\alpha,\beta) =
  (-1)^{\binom{k}{2}} \, \T(\alpha,\gamma^{\mu_1\dots\mu_k}\beta) .
\end{equation*}

\begin{lemma}
  \label{commute}
  \begin{equation*}
    \gamma^{\mu_1\dots\mu_k}\gamma^\mu - (-1)^k \,
    \gamma^\mu \gamma^{\mu_1\dots\mu_k} = 2 \sum_{j=1}^k (-1)^{k-j}
    \eta^{\mu\mu_j} \gamma^{\mu_1\dots\widehat{\mu}{}_j\dots\mu_k}
  \end{equation*}
\end{lemma}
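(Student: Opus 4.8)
The plan is to prove the identity by induction on $k$, using only the Clifford relation $\gamma^\mu\gamma^\nu+\gamma^\nu\gamma^\mu=2\eta^{\mu\nu}$. Both sides are multilinear and alternating in the indices $\mu_1,\dots,\mu_k$, so it suffices to treat the case where these indices are distinct; then the antisymmetrized product $\gamma^{\mu_1\dots\mu_k}$ coincides with the ordered product $\gamma^{\mu_1}\gamma^{\mu_2}\cdots\gamma^{\mu_k}$, and in particular $\gamma^{\mu_1}\gamma^{\mu_2\dots\mu_k}=\gamma^{\mu_1\dots\mu_k}$. The base case $k=0$ reads $\gamma^\mu-\gamma^\mu=0$, matching the empty sum on the right.

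For the inductive step I would write $\gamma^{\mu_1\dots\mu_k}=\gamma^{\mu_1}\gamma^{\mu_2\dots\mu_k}$ and expand
\begin{equation*}
  \gamma^{\mu_1\dots\mu_k}\gamma^\mu-(-1)^k\gamma^\mu\gamma^{\mu_1\dots\mu_k}
  = \gamma^{\mu_1}\bigl(\gamma^{\mu_2\dots\mu_k}\gamma^\mu\bigr)
  -(-1)^k\bigl(\gamma^\mu\gamma^{\mu_1}\bigr)\gamma^{\mu_2\dots\mu_k} .
\end{equation*}
In the first term I would invoke the induction hypothesis for the $k-1$ indices $\mu_2,\dots,\mu_k$ to rewrite $\gamma^{\mu_2\dots\mu_k}\gamma^\mu$ as $(-1)^{k-1}\gamma^\mu\gamma^{\mu_2\dots\mu_k}$ plus the correction $2\sum_{j=2}^k(-1)^{k-j}\eta^{\mu\mu_j}\gamma^{\mu_2\dots\widehat{\mu}{}_j\dots\mu_k}$; in the second term I would use the Clifford relation to replace $\gamma^\mu\gamma^{\mu_1}$ by $-\gamma^{\mu_1}\gamma^\mu+2\eta^{\mu\mu_1}$.

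The decisive step is then the bookkeeping of signs. The two contributions containing the ``bulk'' operator $\gamma^{\mu_1}\gamma^\mu\gamma^{\mu_2\dots\mu_k}$ carry coefficients $(-1)^{k-1}$ and $(-1)^k$, and so cancel. After reabsorbing the leading $\gamma^{\mu_1}$ into the product, the corrections from the induction hypothesis become exactly $2\sum_{j=2}^k(-1)^{k-j}\eta^{\mu\mu_j}\gamma^{\mu_1\mu_2\dots\widehat{\mu}{}_j\dots\mu_k}$, while the remaining Clifford term $-2(-1)^k\eta^{\mu\mu_1}\gamma^{\mu_2\dots\mu_k}$ supplies the missing $j=1$ summand, since $-(-1)^k=(-1)^{k-1}$ and $\gamma^{\mu_2\dots\mu_k}=\gamma^{\mu_1\dots\widehat{\mu}{}_1\dots\mu_k}$. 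Assembling these gives precisely $2\sum_{j=1}^k(-1)^{k-j}\eta^{\mu\mu_j}\gamma^{\mu_1\dots\widehat{\mu}{}_j\dots\mu_k}$.

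The only genuine obstacle is sign management: one must correctly track the factor $(-1)^{k-j}$ through the index shift in the induction hypothesis, confirm that the bulk term cancels, and check that the Clifford correction lands in the $j=1$ slot with the right sign. Once these are pinned down, the identity follows formally.
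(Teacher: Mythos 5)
Your proof is correct. Note, however, that the paper itself gives no proof of Lemma~\ref{commute}: it is stated bare at the end of the appendix, as a standard Clifford-algebra identity, so there is no argument to compare yours against. Your induction is a perfectly sound way to establish it: the reduction to distinct (hence, in an orthogonal basis, anticommuting) indices via alternation is legitimate, the identification $\gamma^{\mu_1}\gamma^{\mu_2\dots\mu_k}=\gamma^{\mu_1\dots\mu_k}$ holds in that case, and the three sign checks you flag all come out right --- the index shift preserves the coefficient since $(-1)^{(k-1)-(j-1)}=(-1)^{k-j}$, the bulk terms carry coefficients $(-1)^{k-1}+(-1)^{k}=0$, and the Clifford correction $-2(-1)^{k}\eta^{\mu\mu_1}\gamma^{\mu_2\dots\mu_k}$ is exactly the $j=1$ summand. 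In effect you have supplied a proof the authors chose to omit.
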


\section*{Acknowledgments}

{\small

  The first author is grateful to Chris Hull for introducing him to
  the superparticle. His research is partially supported by a
  Fellowship of the Simons Foundation, Collaboration Grants 243025 and
  524522 of the Simons Foundation, and EPSRC Programme Grant
  EP/K034456/1 ``New Geometric Structures from String Theory.'' Parts
  of this paper were written while he was visiting Imperial College,
  the Yau Mathematical Sciences Center at Tsinghua University and the
  Department of Mathematics of Columbia University, as a guest of
  Chris Hull, Si Li and Mohammed Abouzaid respectively.

  The research of the second author is supported in part by the
  National Science Foundation grant ``RTG: Analysis on manifolds'' at
  Northwestern University.

}

\begin{bibdiv}
\begin{biblist}

\bib{AKSZ}{article}{
   author={Alexandrov, M.},
   author={Schwarz, A.},
   author={Zaboronsky, O.},
   author={Kontsevich, M.},
   title={The geometry of the master equation and topological quantum field
     theory},
   journal={Internat. J. Modern Phys. A},
   volume={12},
   date={1997},
   number={7},
   pages={1405--1429},
}

\bib{BKV}{article}{
   author={Bergshoeff, E.},
   author={Kallosh, R.},
   author={Van Proeyen, A.},
   title={Superparticle actions and gauge fixings},
   journal={Classical Quantum Gravity},
   volume={9},
   date={1992},
   number={2},
   pages={321--360},
%   issn={0264-9381},
%   review={\MR{1148683}},
}

\bib{Berkovits}{article}{
   author={Berkovits, Nathan},
   title={Multiloop amplitudes and vanishing theorems using the pure spinor
   formalism for the superstring},
   journal={J. High Energy Phys.},
   date={2004},
   number={9},
   pages={047, 40},
%   issn={1126-6708},
%   review={\MR{2108751}},
%   doi={10.1088/1126-6708/2004/09/047},
}

\bib{Boardman}{article}{
   author={Boardman, J. Michael},
   title={Conditionally convergent spectral sequences},
   conference={
      title={Homotopy invariant algebraic structures},
      address={Baltimore, MD},
      date={1998},
   },
   book={
      series={Contemp. Math.},
      volume={239},
      publisher={Amer. Math. Soc., Providence, RI},
   },
   date={1999},
   pages={49--84},
%   review={\MR{1718076}},
%   doi={10.1090/conm/239/03597},
}

\bib{BG}{article}{
   author={Bousfield, A. K.},
   author={Gugenheim, V. K. A. M.},
   title={On ${\rm PL}$ de Rham theory and rational homotopy type},
   journal={Mem. Amer. Math. Soc.},
   volume={8},
   date={1976},
   number={179},
%   pages={ix+94},
%   issn={0065-9266},
%   review={\MR{0425956}},
%   doi={10.1090/memo/0179},
}

\bib{BDZDH}{article}{
   author={Brink, L.},
   author={Deser, S.},
   author={Zumino, B.},
   author={Di Vecchia, P.},
   author={Howe, P.},
   title={Local supersymmetry for spinning particles},
   journal={Phys. Lett. B},
   volume={64},
   date={1976},
   number={4},
   pages={435--438},
}

\bib{BS}{article}{
   author={Brink, Lars},
   author={Schwarz, John H.},
   title={Quantum superspace},
   journal={Phys. Lett. B},
   volume={100},
   date={1981},
   number={4},
   pages={310--312},
%   issn={0370-2693},
%   review={\MR{609993}},
%   doi={10.1016/0370-2693(81)90093-9},
}

\bib{cohomology}{article}{
   author={Getzler, Ezra},
   title={The Batalin-Vilkovisky cohomology of the spinning particle},
   journal={J. High Energy Phys.},
   date={2016},
   number={6},
%   pages={017, front matter+16},
%   issn={1126-6708},
%   review={\MR{3538178}},
}

\bib{curved}{article}{
   author={Getzler, Ezra},
   title={The spinning particle with curved target},
   journal={Comm. Math. Phys.},
   volume={352},
   date={2017},
   number={1},
   pages={185--199},
%   issn={0010-3616},
%   review={\MR{3623257}},
%   doi={10.1007/s00220-016-2764-y},
}

\bib{covariant}{article}{
   author={Getzler, Ezra},
   title={Covariance in the Batalin--Vilkovisky formalism and the
     Maurer--Cartan equation for curved Lie algebras},
   journal={Lett. Math. Phys.},
   date={2018},
%   issn={0377-9017},
%   eprint={arXiv:1801.04525}
%   doi={https://doi.org/10.1007/s11005-018-1106-8}
}

\bib{LRSVV}{article}{
   author={Lindstr\"om, U.},
   author={Ro\v cek, M.},
   author={Siegel, W.},
   author={van Nieuwenhuizen, P.},
   author={van de Ven, A. E.},
   title={Lorentz-covariant quantization of the superparticle},
   journal={Phys. Lett. B},
   volume={224},
   date={1989},
   number={3},
   pages={285--287},
%   issn={0370-2693},
%   review={\MR{1003768}},
%   doi={10.1016/0370-2693(89)91230-6},
}

\bib{Sullivan}{article}{
   author={Sullivan, Dennis},
   title={Infinitesimal computations in topology},
   journal={Inst. Hautes \'Etudes Sci. Publ. Math.},
   number={47},
   date={1977},
   pages={269--331},
%   issn={0073-8301},
%   review={\MR{0646078}},
}

\end{biblist}
\end{bibdiv}

\address{Northwestern University, Evanston, Illinois, USA}

\end{document}